\newcommand\blfootnote[1]{%
  \begingroup
  \renewcommand\thefootnote{}\footnote{#1}%
  \addtocounter{footnote}{-1}%
  \endgroup
}
\theoremstyle{definition}
\newtheorem{Theorem}{Theorem}
\newtheorem{Proposition}{Proposition}
\newtheorem{Lemma}{Lemma}
\newtheorem{Example}{Example}
\newtheorem{Remark}{Remark}
\newtheorem{Definition}{Definition}
\theoremstyle{definition} 
\begin{document}

\title{On the Joint Typicality of Permutations of Sequences of Random Variables
}

\author{Farhad Shirani, Siddharth Garg, and Elza Erkip \\
Electrical and Computer Engineering Department \\
New York University, NY, USA \\\date{} }
\maketitle
\begin{abstract}
Permutations of correlated sequences of random variables appear naturally in a variety of applications such as graph matching and asynchronous communications. In this paper, the asymptotic statistical behavior of such permuted sequences is studied. It is assumed that a collection of random vectors is produced based on an arbitrary joint distribution, and the vectors undergo a permutation operation. The joint typicality of the resulting permuted vectors with respect to the original distribution is investigated. 
As an initial step, permutations of pairs of correlated random vectors are considered.  
    It is shown that the probability of joint typicality of the permuted vectors depends only on the number and length of the disjoint cycles of the permutation. Consequently, it suffices to study typicality for a class of permutations called \textit{standard permutations}, for which,  upper-bounds on the probability of joint typicality are derived. The notion of standard permutations is extended to  a class of permutation vectors called \textit{Bell permutation vectors}. 
By investigating Bell permutation vectors, upper-bounds on the probability of joint typicality of permutations of arbitrary collections of random sequences are derived.  
\blfootnote{This work is supported by NYU WIRELESS Industrial Affiliates and
National Science Foundation grant CCF-1815821.} 
\end{abstract}

\section{Introduction}
The notion of weak typicality was first introduced by Shannon \cite{Shannon} in studying the data compression problem. Later, Wolfowitz \cite{wolfowitz2012coding} introduced strong typicality to provide alternative proofs for Shannon's channel coding theorem in data transmission. In the past several decades, typicality has become one of the essential components in the information theoretic study of point-to-point and multiterminal communication systems  \cite{el2011network}. 
Typicality is also useful in other applications such as graph matching \cite{shirani2018matching, ped}, database matching \cite{shiranidatabase}, and group testing \cite{atia2012boolean}, where sequences of noisy observations of an original dataset are observed. In these instances, the relationship between the original data and the observed data can be captured through a \textit{`test channel'} which induces the noise on the observations, and  joint typicality  can be used to identify correct matches between the original and observed data. 

The conventional approach in information theory which uses typicality to study communication systems relies heavily on the assumption of synchronous communication. To elaborate, let us consider the transmission of data over a discrete memoryless channel. In this problem, it is assumed that an $n$-length vector $X^n$ drawn in an independendently and identically distributed (i.i.d.) fashion from distribution $P_X$ is input to a channel characterized by the transition probability $P_{Y|X}$ over $n$ uses of the channel, and the output vector $Y^n$ is produced. The receiver may then leverage the fact that with high probability, the pair $(X^n,Y^n)$ is jointly typical with respect to the joint distribution $P_{X}P_{Y|X}$ to recover the transmitted message. 

In other problems of interest such as graph matching \cite{shirani2018matching} and database matching \cite{shirani2019concentration}, the \textit{receiver} in the \textit{test channel} does not know the order of the received signals. For instance, the graph matching problem considers a pair of randomly generated graphs with $n$ vertices and correlated $n\times n$ adjacency matrices $(\mathbf{X}, \mathbf{Y})$. It is assumed that we are given the adjacency matrix $\mathbf{X}$, and the permuted  adjacency matrix $\pi(\mathbf{Y})$, where the permutation is due to a relabeling of the underlying graph. The objective is to recover the permutation $\pi$ by leveraging the correlation among $\mathbf{X}$ and $\mathbf{Y}$. 
This necessitates the study of the probability of joint typicality of pairs of vectors of the form $(\mathbf{X},\pi(\mathbf{Y}))$. 

In this paper, we investigate the typicality of permutations of sequences of correlated random vectors.
We first consider pairs of independently and i.i.d. random vectors $(X^n,Y^n)$ generated according to a joint distribution $P_{XY}$. For a given permutation $\pi$ of $n$-length sequences, we are interested in finding the probability of joint typicality of $(X^n,\pi(Y^n))$ with respect to the distribution $P_{XY}$. We first show that the probability of joint typicality only depends on the number and length of the disjoint cycles of the permutation $\pi$. Consequently, we define a class of permutation vectors called \textit{standard permutations}, such that for any given number and length of the disjoint cycles, there is exactly one unique standard permutation. We derive bounds on the probability of joint typicality of $(X^n,\pi(Y^n))$ with respect to the distribution $P_{XY}$, for any standard permutation $\pi$. Furthermore, we consider typicality of collections of permuted sequences $(\pi_j(X^n_{(j)}), j\in \{1,2,\cdots,m\}$, where $\pi_j$ and $X^n_{(j)}$ are the $j$th permutation and $j$th sequence, respectively. We extend the notion of standard permutations to collections of more than two permuted sequences and introduce the class of \textit{Bell permutation vectors}, and derive bounds on the probability of joint typicality of $(\pi_j(X^n_{(j)}), j\in \{1,2,\cdots,m\}$ for any given Bell permutation vector $
\pi$. 

The rest of the paper is organized as follows:  Section \ref{sec:prilim} provides the necessary background on permutations. Section \ref{sec:pair} studies joint typicality of pairs of permuted sequences of random variables. Section \ref{sec:coll} extends the results to collections of permuted random vectors. Section \ref{sec:con} concludes the paper.





\textit{Notation:} 
 Random variables are represented by capital letters such as $X, U$ and their realizations by small letters such as $x, u$.
 Sets are denoted by calligraphic letters such as $\mathcal{X}, \mathcal{U}$. The probability of the event $\mathcal{A}\subset\mathcal{X}$ is denoted by $P_X(\mathcal{A})$, and the subscript is omitted when there is no ambiguity.
 The set of natural numbers, and real numbers are shown by $\mathbb{N}$, and $\mathbb{R}$ respectively. The random variable $\mathbbm{1}_{\mathcal{E}}$ is the indicator function of the event $\mathcal{E}$.
 The set of numbers $\{n,n+1,\cdots, m\}, n,m\in \mathbb{N}$ is represented by $[n,m]$. Furthermore, for the interval $[1,m]$, we sometimes use the shorthand notation $[m]$ for brevity. 
 For a given $n\in \mathbb{N}$, the $n$-length vector $(x_1,x_2,\hdots, x_n)$ is written as $x^n$.

\section{Preliminaries}
\label{sec:prilim}
We follow the notation used in \cite{isaacs} in our study of permutation groups which is summarized below. 
\begin{Definition}[\bf Permutation]
\label{def:perm1}
A permutation on the set $[1,n], n\in \mathbb{N}$ is a bijection $\pi:[1,n]\to [1,n]$. The set of all permutations on the set $[1,n]$ is denoted by $\mathcal{S}_n$.
\end{Definition}
\begin{Definition}[\bf Cycle and Fixed Point]
 A permutation $\pi \in \mathcal{S}_n, n\in \mathbb{N}$  is called a cycle if there exists $k\in [1,n]$ and $\alpha_1,\alpha_2,\cdots,\alpha_k\in [1,n]$ such that i) $\pi(\alpha_i)=\alpha_{i+1}, i\in [1,k-1]$, ii) $\pi(\alpha_n)=\alpha_1$, and iii) $\pi(\beta)=\beta$ if $\beta\neq \alpha_i, \forall i\in [1,k]$. The variable $k$ is the length of the cycle. The element $\alpha$ is a fixed point of the permutation if $\pi(\alpha)=\alpha$. We write $\pi=(\alpha_1,\alpha_2,\cdots,\alpha_k)$.
 The cycle $\pi$ is non-trivial if $k\geq 2$. 
\end{Definition}

\begin{Lemma}[\hspace{-.005in}\bf\cite{isaacs}]
 Every permutation $\pi \in \mathcal{S}_n, n\in \mathbb{N}$ has a unique decomposition into disjoint non-trivial cycles.  
\end{Lemma}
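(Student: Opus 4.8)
The plan is to prove the two halves separately: \emph{existence} of a decomposition into disjoint non-trivial cycles by induction on the number of non-fixed points of $\pi$, and \emph{uniqueness} by showing that each cycle occurring in any such decomposition is forced to equal the orbit of any one of its elements under $\pi$.

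For existence, the base case is $\pi$ equal to the identity, which has no non-fixed points and for which the empty product of cycles works. For the inductive step, I would fix a non-fixed point $\alpha_1\in[1,n]$ and consider the sequence $\alpha_1,\pi(\alpha_1),\pi^2(\alpha_1),\dots$. Since $[1,n]$ is finite, some iterate repeats; using that $\pi$ (hence each power $\pi^j$) is a bijection, I would argue that the first value to repeat is $\alpha_1$ itself, so there is a least $k\ge 2$ with $\pi^k(\alpha_1)=\alpha_1$ and with $\alpha_1,\pi(\alpha_1),\dots,\pi^{k-1}(\alpha_1)$ pairwise distinct. Setting $c=(\alpha_1,\pi(\alpha_1),\dots,\pi^{k-1}(\alpha_1))$, a non-trivial cycle, one checks that $\pi':=c^{-1}\pi$ fixes every $\pi^i(\alpha_1)$ and agrees with $\pi$ on all remaining points, so $\pi'$ has strictly fewer non-fixed points than $\pi$. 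By the induction hypothesis, $\pi'=c_2\cdots c_r$ for pairwise disjoint non-trivial cycles whose supports lie among the non-fixed points of $\pi'$; these supports are disjoint from the support of $c$, so $\pi=c\,c_2\cdots c_r$ is the required decomposition.

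For uniqueness, suppose $\pi=c_1\cdots c_r=d_1\cdots d_s$, where the $c_i$ are pairwise disjoint non-trivial cycles and likewise the $d_j$. I would first record that disjoint cycles commute, and that if $\beta$ is a non-fixed point of $\pi$, then $\beta$ lies in the support of exactly one $c_i$, and on that support $\pi$ acts exactly as $c_i$; consequently the support of $c_i$ is precisely the orbit $\{\beta,\pi(\beta),\pi^2(\beta),\dots\}$ and $c_i=(\beta,\pi(\beta),\dots)$ independently of the starting element. The same statements hold for the $d_j$. Hence the unique $c_i$ and the unique $d_j$ whose support contains $\beta$ must coincide, which yields a bijection between $\{c_1,\dots,c_r\}$ and $\{d_1,\dots,d_s\}$ and establishes uniqueness up to reordering the factors.

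The main obstacle is the finite combinatorial bookkeeping in the existence step, namely verifying via injectivity of $\pi$ that the iterates of $\alpha_1$ first return exactly to $\alpha_1$, that the listed iterates are distinct, and that $c^{-1}\pi$ really does have strictly fewer non-fixed points; coupled with the observation that disjoint cycles commute, which is precisely what makes the product of the orbit-cycles well defined regardless of order. The remaining verifications are routine.
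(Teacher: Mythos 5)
Your argument is correct: the existence step (induction on the number of non-fixed points, splitting off the cycle formed by the orbit of a moved point, with injectivity guaranteeing first return to $\alpha_1$) and the uniqueness step (each cycle in any disjoint decomposition must equal the $\pi$-orbit of any element of its support) are both sound and complete. The paper itself gives no proof, citing this as a classical fact from its reference on finite group theory, and your proof is precisely the standard textbook argument for that fact, so there is nothing to reconcile.
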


\begin{Definition}
For a given $n,m,c\in \mathbb{N}$, and $1\leq i_1\leq i_2\leq \cdots\leq i_c \leq n$ such that $n=\sum_{j=1}^ci_j+m$, an $(m,c,i_1,i_2,\cdots,i_c)$-permutation is a permutation in $\mathcal{S}_n$ which has $m$ fixed points and $c$ disjoint cycles with lengths $i_1,i_2,\cdots,i_c$, respectively.
\label{def:cycle}
\end{Definition}

\begin{Example}
Consider the permutation which maps the vector $(1,2,3,4,5)$ to $(5,1,4,3,2)$. The permutation can be written as a decomposition of disjoint cycles in the following way $\pi=(1,2,5)(3,4)$, where $(1,2,5)$ and $(3,4)$ are cycles with lengths $3$ and $2$, respectively. The permutation $\pi$ is a $(0,2,2,3)$-permutation.
\end{Example}

\begin{Definition}[\bf Sequence Permutation]
\label{def:perm2}
 For a given sequence $y^n\in \mathbb{R}^n$ and permutation $\pi\in \mathcal{S}_n$, the sequence $z^n=\pi(y^n)$ is defined as $z^n=(y_{\pi(i)})_{i\in [1,n]}$.\footnote{Note that in Definitions \ref{def:perm1} and \ref{def:perm2} we have used  $\pi$ to denote both a scalar function which operates on the set $[1,n]$ as well as a function which operates on the vector space $\mathbb{R}^n$.}
\end{Definition}

\begin{Definition}[\bf Derangement]
A permutation on vectors of length $n\in \mathbb{N}$ is called a derangement if it does not have any fixed points. The number of distinct derangements of $n$-length vectors is denoted by $!n$. 
\end{Definition}

\section{Permutations of Pairs of Sequences}
As a  first step, we consider typicality of permutations of pairs of correlated sequences. 
\begin{Definition}[\bf Strong Typicality \cite{cover}]
\label{Def:typ}
Let the pair of random variables $(X,Y)$ be defined on the probability space $(\mathcal{X}\times\mathcal{Y},P_{XY})$, where $\mathcal{X}$ and $\mathcal{Y}$ are finite alphabets. The $\epsilon$-typical set of sequences of length $n$ with respect to $P_{XY}$ is defined as:
\begin{align*}
&\mathcal{A}_{\epsilon}^n(X,Y)=\Big\{(x^n,y^n): \Big|\qquad\frac{1}{n}N(\alpha,\beta|x^n,y^n)-P_{XY}(\alpha,\beta)\Big|\leq \epsilon, \forall (\alpha,\beta)\in \mathcal{X}\times\mathcal{Y}\Big\},
\end{align*}
where $\epsilon>0$, $n\in \mathbb{N}$, and $N(\alpha,\beta|x^n,y^n)= \sum_{i=1}^n \mathbbm{1}\left((x_i,y_i)=(\alpha,\beta)\right)$.  
\end{Definition}

For a correlated pair of independent and identically distributed (i.i.d) sequences $(X^n,Y^n)$ and arbitrary permutations $\pi_x,\pi_y\in \mathcal{S}_n$, we are interested in bounding the probability $P((\pi_x(X^n),\pi_y(Y^n))\in \mathcal{A}_{\epsilon}^n(X,Y))$.

\label{sec:pair}
In our analysis, we make extensive use of the standard permutations defined below.
\begin{Definition}[\bf Standard Permutation]
Let $m,c,i_1,i_2,\cdots,i_c$ be as in Definition \ref{def:cycle}.
The $(m,c,i_1,i_2,\cdots,i_c)$-standard permutation is defined as the $(m,c,i_1,i_2,\cdots,i_c)$-permutation consisting of the cycles $(\sum_{j=1}^{k-1}i_j+1,\sum_{j=1}^{k-1}i_j+2,\cdots,\sum_{j=1}^{k}i_j), k\in [1,c]$. Alternatively, the $(m,c,i_1,$ $i_2,\cdots,i_c)$-standard permutation is defined as:
\begin{align*}
&\pi=(1,2,\cdots,i_1)(i_1+1,i_1+2,\cdots,i_1+i_2)\cdots
\\&(\sum_{j=1}^{c-1}i_j+1,\sum_{j=1}^{c-1}i_j+2,\cdots,\sum_{j=1}^{c}i_j)(n-m+1)(n-m+2)\cdots (n).
\end{align*}
\label{def:stan_perm}
\end{Definition}
\begin{Example}
The $(2,2,3,2)$-standard permutation is a permutation which has $m=2$ fixed points and $c=2$ cycles. The first cycle has length $i_1=3$ and the second cycle has length $i_2=2$. It is a permutation 
on sequences of length $n=\sum_{j=1}^ci_j+m=3+2+2=7$. The permutation is given by $\pi= (1 2 3)(4 5)(6)(7)$. For an arbitrary sequence $\alpha^7=(\alpha_1,\alpha_2,\cdots,\alpha_7)$, we have:
\begin{align*}
 \pi(\alpha^7)=(\alpha_3,\alpha_1,\alpha_2,\alpha_5,\alpha_4,\alpha_6,\alpha_7).
\end{align*}
\end{Example}

The following proposition shows that in order to find bounds on the probability of joint typicality of permutations of correlated sequences, it suffices to study pairs of permuted sequences $(X^n,\pi(Y^n))$, where $\pi$ is an standard permutation. 

\begin{Proposition}
\label{Prop:1}
 Let $(X^n,Y^n)$ be a pair of i.i.d sequences defined on finite alphabets. We have:
\\ i) For an arbitrary permutation $\pi\in \mathcal{S}_n$, 
 \begin{align*}
 P((\pi(X^n),\pi(Y^n))\in \mathcal{A}_{\epsilon}^n(X,Y))=P((X^n,Y^n)\in \mathcal{A}_{\epsilon}^n(X,Y)).
\end{align*}
ii)  Following the notation in Definition \ref{def:stan_perm}, let $\pi_1$ be an arbitrary $(m,c,i_1,i_2,\cdots,i_c)$-permutation  and let $\pi_2$ be the $(m,c,i_1,i_2,\cdots,i_c)$-standard permutation. Then, 
\begin{align*}
 P((X^n,\pi_1(Y^n))\in \mathcal{A}_{\epsilon}^n(X,Y))=P((X^n,\pi_2(Y^n))\in \mathcal{A}_{\epsilon}^n(X,Y)).
 \end{align*}
 iii) For arbitrary permutations $\pi_x,\pi_y\in \mathcal{S}_n$, there let $\pi$ be the standard permutation having the same number of cycles and cycle lengths as that of $\pi_x^{-1}(\pi_y)$. Then,
 we have:
\begin{align*}
 P((\pi_x(X^n),\pi_y(Y^n))\in \mathcal{A}_{\epsilon}^n(X,Y))=P((X^n,\pi(Y^n))\in \mathcal{A}_{\epsilon}^n(X,Y)).
\end{align*}
\end{Proposition}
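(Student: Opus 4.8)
The plan is to prove the three parts in the order stated, since each relies on the previous one, and all three ultimately reduce to one elementary observation: the typicality indicator depends only on the joint type (the empirical distribution of pairs), and a simultaneous relabeling of the coordinates of both sequences does not change that joint type. More precisely, for any permutation $\sigma\in\mathcal{S}_n$ and any sequences $a^n,b^n$, we have $N(\alpha,\beta|\sigma(a^n),\sigma(b^n)) = N(\alpha,\beta|a^n,b^n)$ for every $(\alpha,\beta)$, since applying $\sigma$ merely reorders the list of pairs $((a_i,b_i))_{i\in[1,n]}$. Hence $(a^n,b^n)\in\mathcal{A}_\epsilon^n(X,Y)$ if and only if $(\sigma(a^n),\sigma(b^n))\in\mathcal{A}_\epsilon^n(X,Y)$.

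For part (i), I would apply this observation with $\sigma=\pi$, $a^n = X^n$, $b^n = Y^n$: the event $\{(\pi(X^n),\pi(Y^n))\in\mathcal{A}_\epsilon^n\}$ coincides with the event $\{(X^n,Y^n)\in\mathcal{A}_\epsilon^n\}$ pointwise on the sample space, so the two probabilities are trivially equal. (No use of the i.i.d.\ or independence assumptions is needed here — only the type-invariance.) For part (iii), given $\pi_x,\pi_y$, I would first note that $(\pi_x(X^n),\pi_y(Y^n))$ has the same joint type as $(X^n, \pi_x^{-1}(\pi_y)(Y^n))$: apply the observation with $\sigma = \pi_x^{-1}$ to the pair $(\pi_x(X^n),\pi_y(Y^n))$, using that $\pi_x^{-1}(\pi_x(X^n)) = X^n$ and $\pi_x^{-1}(\pi_y(Y^n)) = \pi_x^{-1}(\pi_y)(Y^n)$ (composition of sequence permutations, Definition \ref{def:perm2}). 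Then part (ii) applied to the $(m,c,i_1,\dots,i_c)$-permutation $\pi_1:=\pi_x^{-1}(\pi_y)$ and its associated standard permutation $\pi_2 := \pi$ finishes the chain of equalities.

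So the crux is part (ii): if $\pi_1$ and $\pi_2$ are two permutations with the same cycle type (same number of fixed points, same multiset of cycle lengths), then $P((X^n,\pi_1(Y^n))\in\mathcal{A}_\epsilon^n) = P((X^n,\pi_2(Y^n))\in\mathcal{A}_\epsilon^n)$. Here I do need the exchangeability coming from the i.i.d.\ assumption. Two permutations of the same cycle type are conjugate in $\mathcal{S}_n$: there exists $\tau\in\mathcal{S}_n$ with $\pi_1 = \tau^{-1}\pi_2\tau$ (this is the standard conjugacy-class description of $\mathcal{S}_n$, which I would state as a one-line lemma or cite alongside the decomposition lemma from \cite{isaacs}). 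Now apply the type-invariance observation with $\sigma = \tau$ to the pair $(X^n,\pi_1(Y^n))$: the event $\{(X^n,\pi_1(Y^n))\in\mathcal{A}_\epsilon^n\}$ has the same probability as $\{(\tau(X^n),\tau(\pi_1(Y^n)))\in\mathcal{A}_\epsilon^n\}$. I would then unwind the sequence-permutation composition to show $\tau(\pi_1(Y^n)) = (\tau\circ\pi_1)(Y^n)$; since $\tau\pi_1 = \pi_2\tau$ we get $\tau(\pi_1(Y^n)) = \pi_2(\tau(Y^n))$, so the event becomes $\{(\tau(X^n),\pi_2(\tau(Y^n)))\in\mathcal{A}_\epsilon^n\}$. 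Finally, because $(X^n,Y^n)$ is i.i.d., the pair of sequences $(\tau(X^n),\tau(Y^n))$ has exactly the same joint distribution as $(X^n,Y^n)$ (permutation of i.i.d.\ coordinates), hence $(\tau(X^n),\pi_2(\tau(Y^n)))$ is distributed as $(X^n,\pi_2(Y^n))$, giving the claimed equality of probabilities.

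The only subtlety — the step I expect to require the most care in writing — is bookkeeping the interaction between the two meanings of $\pi$ flagged in the footnote to Definition \ref{def:perm2}: I must verify that $\pi_x^{-1}(\pi_y(Y^n)) = (\pi_x^{-1}\pi_y)(Y^n)$ and $\tau(\pi_1(Y^n)) = (\tau\pi_1)(Y^n)$ really hold under the convention $z^n = (y_{\pi(i)})_{i\in[1,n]}$, i.e.\ checking the composition order is consistent. This is a routine index-chase: with $z^n = (y_{\pi_1(i)})_i$ and $w^n = (z_{\tau(i)})_i$, one has $w_i = z_{\tau(i)} = y_{\pi_1(\tau(i))} = y_{(\pi_1\circ\tau)(i)}$, so $\tau(\pi_1(y^n)) = (\pi_1\circ\tau)(y^n)$ — note the composition order that this convention induces, which I would track carefully so that the conjugacy identity $\pi_1\circ\tau = \tau\circ\pi_2$ (or its inverse, depending on which way we write conjugation) lines up correctly with $\pi_2$. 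Once the indexing convention is pinned down, everything else is immediate from type-invariance of typicality and exchangeability of i.i.d.\ sequences.
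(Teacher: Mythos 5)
Your proposal is correct and follows essentially the same route as the paper: part (i) by invariance of the joint type under simultaneous relabeling, part (ii) by conjugacy of same-cycle-type permutations combined with part (i) and the exchangeability of the i.i.d.\ pair, and part (iii) by reducing to part (ii) via $\sigma=\pi_x^{-1}$ (a step the paper's appendix leaves implicit). Your explicit index-chase for the composition convention $\tau(\pi_1(y^n))=(\pi_1\circ\tau)(y^n)$ is in fact more careful than the paper's own write-up, and since $\pi_y\circ\pi_x^{-1}$ and $\pi_x^{-1}\circ\pi_y$ are conjugate, the cycle type feeding into part (ii) is unaffected by that bookkeeping.
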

\begin{proof}
Appendix \ref{Ap:Prop:1}.
\end{proof}

The following theorem provides an upper-bound on the probability of joint typicality of a permutation of correlated sequences for a permutation with $m\in [n]$ fixed points.  
\begin{Theorem}
Let $(X^n,Y^n)$ be a pair of i.i.d sequences defined on finite alphabets $\mathcal{X}$ and $\mathcal{Y}$, respectively. For any permutation $\pi$ with $m\in [n]$ fixed points, the following holds:
 \begin{align}
  \label{eq:perm_bound}
  &P((X^n,\pi(Y^n))\in \mathcal{A}_{\epsilon}^n(X,Y)) 
  \\&\leq 2^{-\frac{n}{4}(D(P_{XY}
 ||(1-\alpha)P_XP_Y+ \alpha P_{XY})-|\mathcal{X}||\mathcal{Y}|\epsilon+O(\frac{\log{n}}{n}))}, \nonumber
\end{align}
where $\alpha= \frac{m}{n}$, and $D(\cdot||\cdot)$ is the Kullback-Leibler divergence. 
\label{th:1}
\end{Theorem}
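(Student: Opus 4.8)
My plan is to reduce the problem to a standard permutation using Proposition~\ref{Prop:1} and then run an exponential change of measure, exploiting the fact that the $n$ i.i.d.\ coordinate pairs $(X_i,Y_i)$ decompose into independent blocks along the fixed points and the disjoint cycles of $\pi$. By Proposition~\ref{Prop:1}(ii) the left-hand side of \eqref{eq:perm_bound} depends on $\pi$ only through its cycle profile $(m,c,i_1,\dots,i_c)$, so I may take $\pi$ to be the corresponding standard permutation; since the right-hand side involves only $m=\alpha n$, the estimate must come out uniform in $c,i_1,\dots,i_c$. Write $F$ for the set of $m$ fixed points and $(\alpha^\ell_1,\dots,\alpha^\ell_{k_\ell})$, $\ell\in[c]$, for the non-trivial cycles (indices read cyclically), so that $\sum_\ell k_\ell=n-m$ and $(\pi(Y^n))_{\alpha^\ell_j}=Y_{\alpha^\ell_{j+1}}$.

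For any bounded $g:\mathcal X\times\mathcal Y\to\mathbb R$, on the event $(X^n,\pi(Y^n))\in\mathcal A_\epsilon^n(X,Y)$ one has $\sum_{i}g(X_i,(\pi(Y^n))_i)=\sum_{\alpha,\beta}g(\alpha,\beta)N(\alpha,\beta|X^n,\pi(Y^n))\ge n\sum_{\alpha,\beta}g(\alpha,\beta)P_{XY}(\alpha,\beta)-n\epsilon|\mathcal X||\mathcal Y|\|g\|_\infty$, so Markov's inequality applied to $\exp(\sum_i g(X_i,(\pi(Y^n))_i))$ gives
\begin{align*}
&P((X^n,\pi(Y^n))\in\mathcal A_\epsilon^n(X,Y))\\
&\quad\le e^{-n\sum_{\alpha,\beta}g(\alpha,\beta)P_{XY}(\alpha,\beta)+n\epsilon|\mathcal X||\mathcal Y|\|g\|_\infty}\;\mathbb E\Big[e^{\sum_i g(X_i,(\pi(Y^n))_i)}\Big].
\end{align*}
Since the $(X_i,Y_i)$ are i.i.d.\ and $F$ and the distinct cycles occupy disjoint coordinates, the moment generating factor factorizes as $\big(\sum_{\alpha,\beta}P_{XY}(\alpha,\beta)e^{g(\alpha,\beta)}\big)^{m}\prod_{\ell=1}^{c}\mathbb E\big[e^{\sum_{j=1}^{k_\ell}g(X_{\alpha^\ell_j},Y_{\alpha^\ell_{j+1}})}\big]$.

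The heart of the argument is bounding the cyclic factor. For a cycle of length $k\ge2$ it equals $\mathrm{Tr}(M_g^k)$, where $M_g[(x,y),(x',y')]=P_{XY}(x,y)^{1/2}e^{g(x,y')}P_{XY}(x',y')^{1/2}$, and since $0<\mathrm{Tr}(M^k)=\sum_i\rho_i^k\le(\sum_i|\rho_i|^2)^{k/2}\le\|M\|_F^{k}$ for $k\ge2$ while $\|M_g\|_F^2=\sum_{\alpha,\beta}P_X(\alpha)P_Y(\beta)e^{2g(\alpha,\beta)}$, the $\ell$-th cyclic factor is at most $\big(\sum_{\alpha,\beta}P_X(\alpha)P_Y(\beta)e^{2g(\alpha,\beta)}\big)^{k_\ell/2}$; a more hands-on route, closer to the stated constant, would instead isolate within each cycle the $\lfloor k_\ell/2\rfloor$ permuted pairs of a maximum matching --- these are mutually independent and $P_X\times P_Y$-distributed --- and use $c\le(n-m)/2$ so that at least $(n-m)/4$ such pairs remain. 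Then, putting $g=h/2$, bounding the fixed-point factor by $\big(\sum_{\alpha,\beta}P_{XY}(\alpha,\beta)e^{h(\alpha,\beta)}\big)^{m/2}$ via Cauchy--Schwarz, collecting the exponents, and using concavity of $\log$ to merge $\alpha\log\sum P_{XY}e^{h}+(1-\alpha)\log\sum P_XP_Ye^{h}\le\log\sum Q_\alpha e^{h}$ with $Q_\alpha=(1-\alpha)P_XP_Y+\alpha P_{XY}$, I arrive at a bound of the form $\exp\big(\tfrac n2[\log\sum_{\alpha,\beta}Q_\alpha(\alpha,\beta)e^{h(\alpha,\beta)}-\sum_{\alpha,\beta}P_{XY}(\alpha,\beta)h(\alpha,\beta)]+\mathrm{slack}\big)$; optimizing over $h$ through the Gibbs variational identity $\inf_h\{\log\sum Q_\alpha e^{h}-\sum P_{XY}h\}=-D(P_{XY}\|Q_\alpha)$ yields \eqref{eq:perm_bound} (indeed with $\tfrac n2$ in place of $\tfrac n4$).

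The step I expect to be hardest is the cyclic moment generating factor: one must control the intra-cycle dependence uniformly over all cycle-length profiles while producing the \emph{same} exponential tilt as in the fixed-point blocks, so that the two contributions fuse into one divergence against the mixture $Q_\alpha$ rather than remaining two separate terms. A secondary technical point --- responsible for the $O(\log n/n)$ correction and for the slack showing up as $|\mathcal X||\mathcal Y|\epsilon$ --- is that the minimizing tilt $h^*=\log(P_{XY}/Q_\alpha)$ takes the value $-\infty$ off $\mathrm{supp}(P_{XY})$ and must be regularized, e.g.\ by replacing $P_{XY}$ with $P_{XY}+1/n$ inside the logarithm before optimizing.
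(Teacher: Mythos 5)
Your proposal is correct in its essentials, but it is not the paper's argument. The paper's proof reduces to the standard permutation and then partitions $[1,n]$ into four index classes (the cycle-leading indices, the remaining even and odd indices, and the fixed points), chosen so that within each class the pairs $(X_i,(\pi(Y^n))_i)$ are i.i.d.\ --- with law $P_XP_Y$ on the first three classes and $P_{XY}$ on the fixed points; it then bounds the probability of the typicality event by the geometric mean of the four per-class type probabilities, applies the method of types to each class, and uses convexity of the divergence to merge the four exponents into $D(P_{XY}\|(1-\alpha)P_XP_Y+\alpha P_{XY})$, the factor $\frac{n}{4}$ being exactly the price of the four-way geometric mean. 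Your route is instead a Chernoff/tilting argument: Markov's inequality with an arbitrary tilt $g$, factorization of the moment generating function over the disjoint cycles, the transfer-matrix bound $\mathrm{Tr}(M_g^{k})\le\|M_g\|_F^{k}$ for each cycle of length $k\ge 2$ (note $M_g$ is not symmetric, so this step rests on Schur's inequality $\sum_i|\rho_i|^2\le\|M_g\|_F^2$ --- valid, but worth saying), Cauchy--Schwarz on the fixed-point block, concavity of $\log$ to form the mixture $Q_\alpha$, and the Gibbs variational formula. This is much closer in spirit to the paper's proofs of Lemmas \ref{Lem:No_fixed} and \ref{Lem:short} (Markov plus Cauchy--Schwarz with the specific tilt $\tfrac12\log\frac{P_{XY}}{P_XP_Y}$) than to the proof of Theorem \ref{th:1}, generalized to an arbitrary number of fixed points and an optimized tilt; what it buys is a uniformly better constant, $\tfrac{n}{2}$ in place of $\tfrac{n}{4}$, over all cycle profiles, which in particular recovers Lemma \ref{Lem:No_fixed} and improves Lemma \ref{Lem:short}. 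The ``maximum matching'' detour you sketch is unnecessary once the trace bound is in place.

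Two caveats on the final optimization. First, your $\epsilon$-slack is $\epsilon|\mathcal{X}||\mathcal{Y}|\,\|h\|_\infty$, a distribution-dependent multiple of $\epsilon$ rather than the literal $|\mathcal{X}||\mathcal{Y}|\epsilon$ in \eqref{eq:perm_bound}; this is the same kind of $\delta$ that appears in Lemmas \ref{Lem:No_fixed} and \ref{Lem:short}, so it is harmless, but state your bound with that $\delta$. Second, the proposed regularization of $h^*=\log(P_{XY}/Q_\alpha)$ by replacing $P_{XY}$ with $P_{XY}+1/n$ does not work as written: it forces $\|h\|_\infty=\Theta(\log n)$, so the slack becomes $\Theta(\epsilon\log n)$, which is not absorbed into $O(\frac{\log n}{n})$ and, for fixed $\epsilon$, eventually swamps the divergence. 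Instead truncate at a level independent of $n$: take $h=\max(h^*,-T)$ with $T=\log(1/\epsilon)$ (for $\epsilon$ small this only alters $h^*$ off the support of $P_{XY}$, where $Q_\alpha>0$ whenever $P_{XY}>0$ guarantees $h^*$ is finite), so that $\log\sum Q_\alpha e^{h}\le\log(1+\epsilon)$ while $\sum P_{XY}h=D(P_{XY}\|Q_\alpha)$ is unchanged; the resulting slack is $O(\epsilon\log(1/\epsilon))$, vanishing with $\epsilon$ and uniform in $n$, and if $P_{XY}>0$ everywhere no truncation is needed at all.
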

\begin{proof}
Appendix \ref{Ap:th:1}.
\end{proof}
\begin{Remark}
The upper-bound in Equation \eqref{eq:perm_bound} goes to $0$ as $n\to \infty$ for any non-trivial permutation (i.e. $\alpha$ bounded away from one) and small enough $\epsilon$, as long as $X$ and $Y$ are not independent. 
\end{Remark}

The exponent $D(P_{XY}
 ||(1-\alpha)P_XP_Y+ \alpha P_{XY})$ in Equation \eqref{eq:perm_bound} can be interpreted as follows: for the fixed points of the permutation ($\alpha$ fraction of indices), we have $\pi({Y}_i)=Y_i$. As a result, the joint distribution of the elements  $(X_i,\pi({Y}_i))$ is $P_{XY}$. For the rest of the elements, $\pi({Y}_i)$ are permuted components of $Y^n$, as a result $(X_i,\pi({Y}_i))$ are an independent pair of variables since $(X^n,Y^n)$ is a correlated pair of i.i.d. sequences. Consequently, the distribution of $(X_i,\pi({Y}_i))$ is $P_XP_Y$ for $(1-\alpha)$ fraction of elements which are not fixed points of the permutation. The average distribution is $(1-\alpha)P_{X}P_{Y}+\alpha P_{XY}$ which appears as the second argument in the Kullback-Leibler Divergence in Equation \eqref{eq:perm_bound}. 

Theorem \ref{th:1} provides bounds on the probability of joint typicality of $X^n$ and $\pi(Y^n)$ as a function of the number of fixed points $m$ of the permutation $\pi(\cdot)$. Such bounds are often used in error analysis and derivation of error bounds in various applications \cite{tuncel2005error,shirani2018typicality,csiszar1998method}. The standard method in such analysis is to use a union bounding technique to break the error event into a set of components each pertaining to the  joint typicality of a pair of vectors $(X^n,\pi(Y^n))$. Then, an upper-bound on the probability of error is derived by counting the number of terms $(X^n,\pi(Y^n))$ for which $P((X^n,\pi(Y^n))\in \mathcal{A}_{\epsilon}^n(X,Y))$ is equal to each other and multiplying the total number of terms by that probability. 
From Theorem \ref{th:1}, for permutations of pairs of random vectors $P((X^n,\pi(Y^n))\in \mathcal{A}_{\epsilon}^n(X,Y))$ is `almost' the same for all permutations with equal number of fixed points. 
As a result, in evaluating error exponents a parameter of interest is the number of distinct permutations with a specific number of fixed points and its limiting behavior.

\begin{Lemma}
\label{lem:dercount}
Let $n\in \mathbb{N}$. Let $N_{m}$ be the number of distinct permutations with exactly $m\in [0,n]$ fixed points. Then, 
\begin{align}
    \frac{n!}{m!(n-m)}\leq  N_m= {n \choose m} !(n-m)\leq n^{n-m}.
    \label{eq:der1}
\end{align}
Particularly, let $m= \alpha n, 0<\alpha<1$. Then, the following holds:
\begin{align}
    \lim_{n\to \infty} \frac{\log{N_m}}{n\log{n}}= 1-\alpha.
    \label{eq:der2}
\end{align}
\end{Lemma}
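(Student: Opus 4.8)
The plan is to prove the identity $N_m = \binom{n}{m}\,!(n-m)$ first, then derive the two inequalities in \eqref{eq:der1} from known bounds on the derangement number, and finally obtain the limit \eqref{eq:der2} by taking logarithms and applying Stirling's approximation.

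\emph{Step 1 (the exact count).} A permutation with exactly $m$ fixed points is determined by choosing which $m$ of the $n$ indices are fixed, and then placing a derangement on the remaining $n-m$ indices (a permutation of those indices with no further fixed points). These two choices are independent, so $N_m = \binom{n}{m}\,!(n-m)$. This is the only combinatorial content; the rest is estimation.

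\emph{Step 2 (the inequalities).} For the upper bound, use $!(n-m) \leq (n-m)!$ together with $\binom{n}{m}(n-m)! = \frac{n!}{m!} \leq n^{n-m}$ (the last inequality because $\frac{n!}{m!} = n(n-1)\cdots(m+1)$ is a product of $n-m$ factors each at most $n$). For the lower bound, use the standard estimate $!(k) \geq \frac{k!}{e} \geq \frac{k!}{3}$ for $k\geq 1$; more crudely one can use $!(k)\geq \frac{(k-1)!}{1}$ or simply $!(k)\ge 1$, but to get exactly the stated bound $\frac{n!}{m!(n-m)}$ it suffices to note $!(n-m) \geq \frac{(n-m)!}{n-m}$, which holds because $!(k)\geq (k-1)!$ for $k\geq 2$ (each of the $(k-1)!$ permutations fixing the last symbol's ``successor'' pattern can be completed to a derangement — alternatively just cite $!(k)=k!\sum_{j=0}^k \frac{(-1)^j}{j!} \ge k!(1-1) $, so one instead uses the cleaner bound $!(k)\ge \lfloor k!/e\rceil \ge (k-1)!$). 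Then $\binom{n}{m}\,!(n-m) \geq \binom{n}{m}\frac{(n-m)!}{n-m} = \frac{n!}{m!(n-m)}$. I should double-check which elementary derangement bound the authors intend; the displayed left-hand side strongly suggests they use $!(n-m)\ge (n-m)!/(n-m)$.

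\emph{Step 3 (the limit).} Set $m=\alpha n$ with $0<\alpha<1$. From \eqref{eq:der1}, $\log N_m$ is sandwiched between $\log\frac{n!}{m!(n-m)} = \log n! - \log m! - \log(n-m)$ and $(n-m)\log n$. Divide by $n\log n$. For the upper side, $\frac{(n-m)\log n}{n\log n} = 1-\alpha$ exactly. For the lower side, Stirling gives $\log n! = n\log n - n + O(\log n)$ and $\log m! = m\log m - m + O(\log m) = \alpha n(\log n + \log\alpha) - \alpha n + O(\log n)$, so $\log n! - \log m! - \log(n-m) = (1-\alpha)n\log n + O(n)$; dividing by $n\log n$ gives $1-\alpha + O(1/\log n) \to 1-\alpha$. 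By the squeeze theorem the limit is $1-\alpha$.

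\emph{Main obstacle.} There is no deep obstacle; the only point requiring care is pinning down the exact elementary lower bound on $!(n-m)$ that yields precisely the stated left-hand side $\frac{n!}{m!(n-m)}$ rather than a messier constant, and making sure the $O(n)$ error terms from Stirling are genuinely negligible against $n\log n$ (they are, since $\log n = o(\log n)\cdot$… — more plainly, $O(n)/(n\log n) = O(1/\log n)\to 0$). A minor edge case is $m=n-1$ (where $!(1)=0$, so $N_{n-1}=0$) and $m=n$ (where $!(0)=1$, so $N_n=1$); these are excluded from \eqref{eq:der2} by the hypothesis $0<\alpha<1$, and the inequalities \eqref{eq:der1} should be read with the convention that the lower bound is vacuous when it is not positive.
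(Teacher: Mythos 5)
Your proposal is correct and follows essentially the same route as the paper: the exact count $N_m=\binom{n}{m}\,!(n-m)$, the upper bound via $!(k)\le k!$ and $n!/m!\le n^{n-m}$, the lower bound via $!(k)\ge (k-1)!$, and the squeeze together with Stirling's approximation for the limit. The only divergence is in justifying $!(k)\ge (k-1)!$, which the paper establishes cleanly by the recursion $!k\ge (k-1)\cdot{!(k-1)}$ and induction, whereas your combinatorial sketch of that step is muddled and you fall back on the $k!/e$ rounding fact (valid for $k\ge 2$); your observation about the degenerate cases $m=n-1$ and $m=n$ is a fair caveat that the paper glosses over but is immaterial for the limit.
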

\begin{proof}
Appendix \ref{App:lem:dercount}.
\end{proof}

In the following, we investigate whether the exponent in Equation \eqref{eq:perm_bound} is tight (i.e. whether the exponent can be improved to arrive at a tighter upper-bound). Previously, we provided the justification for the appearance of the term $D(P_{XY}
 ||(1-\alpha)P_XP_Y+ \alpha P_{XY})$ in the exponent in Equation \eqref{eq:perm_bound}. However, a more careful analysis may yield improvements in the coefficient $\frac{n}{4}$ by focusing on specific classes of permutations as described in the following. As a first step, we only consider permutations consisting of a single non-trivial cycle and no fixed points. 
 
 \begin{Lemma}
 \label{Lem:No_fixed}
Let $(X^n,Y^n)$ be a pair of i.i.d sequences defined on finite alphabets $\mathcal{X}$ and $\mathcal{Y}$, respectively. For any permutation $\pi$ with no fixed points, and a single cycle (i.e. $m=0$ and $c=1$), the following holds:
 \begin{align}
  &P((X^n,\pi(Y^n))\in \mathcal{A}_{\epsilon}^n(X,Y))  
  \leq 2^{-\frac{n}{2}(I(X;Y)-\delta)},
\end{align}
where $\delta= 2\sum_{x,y}|\log_2{\frac{P_{XY}(x,y)}{P_X(x)P_Y(y)}}|\epsilon$ and $\epsilon>0$.
 \end{Lemma}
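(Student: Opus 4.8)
The plan is to reduce to a canonical single-cycle permutation, exploit the resulting cyclic dependency structure to extract a large mutually-independent sub-family of ``mismatched'' pairs, and finish with a Chernoff bound whose multiplier is matched to the problem. First I would apply Proposition~\ref{Prop:1}(ii) to assume without loss of generality that $\pi$ is the $(0,1,n)$-standard permutation, i.e.\ the single $n$-cycle $(1,2,\cdots,n)$, so that $\pi(Y^n)=(Y_2,Y_3,\cdots,Y_n,Y_1)$ and the observed pairs are $W_i:=(X_i,Y_{i+1})$ with indices read modulo $n$. The structural observation is that no $X_i$ is ever paired with its own $Y_i$: assuming $n$ is even and setting $\mathcal{O}=\{1,3,\cdots,n-1\}$ and $\mathcal{E}=\{2,4,\cdots,n\}$, each of the families $\{W_i:i\in\mathcal{O}\}$ and $\{W_i:i\in\mathcal{E}\}$ consists of $n/2$ mutually independent pairs, each with law $P_X\times P_Y$, since every original i.i.d.\ pair $(X_j,Y_j)$ contributes at most one coordinate to each family.

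The step I expect to be the main obstacle is that $W^n\in\mathcal{A}_{\epsilon}^n(X,Y)$ does \emph{not} imply that either sub-family is itself typical, because the joint type of $W^n$ may be carried disproportionately by $\mathcal{O}$ or by $\mathcal{E}$. To get around this I would pass to the linear information-density functional $f(Q):=\sum_{\alpha,\beta}Q(\alpha,\beta)\log_2\frac{P_{XY}(\alpha,\beta)}{P_X(\alpha)P_Y(\beta)}$. Writing $\widehat{P}_{W^n},\widehat{P}_{W_{\mathcal{O}}},\widehat{P}_{W_{\mathcal{E}}}$ for the relevant joint types and using $\widehat{P}_{W^n}=\tfrac12(\widehat{P}_{W_{\mathcal{O}}}+\widehat{P}_{W_{\mathcal{E}}})$, typicality gives $|f(\widehat{P}_{W^n})-f(P_{XY})|\le\epsilon\sum_{\alpha,\beta}\big|\log_2\frac{P_{XY}(\alpha,\beta)}{P_X(\alpha)P_Y(\beta)}\big|=\tfrac{\delta}{2}$, while $f(P_{XY})=D(P_{XY}\|P_XP_Y)=I(X;Y)$; hence by linearity at least one of $f(\widehat{P}_{W_{\mathcal{O}}})$ and $f(\widehat{P}_{W_{\mathcal{E}}})$ is at least $I(X;Y)-\tfrac{\delta}{2}$.

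It then remains to bound, for $n/2$ i.i.d.\ pairs $U_1,\cdots,U_{n/2}\sim P_X\times P_Y$, the probability of the event $\tfrac{2}{n}\sum_{k}g(U_k)\ge I(X;Y)-\tfrac{\delta}{2}$, where $g(\alpha,\beta)=\log_2\frac{P_{XY}(\alpha,\beta)}{P_X(\alpha)P_Y(\beta)}$. A Chernoff bound with multiplier $\lambda=1$ is exactly matched here, since $\mathbb{E}_{P_XP_Y}[2^{g(U)}]=\sum_{\alpha,\beta}P_{XY}(\alpha,\beta)=1$, so the probability is at most $2^{-\frac{n}{2}(I(X;Y)-\delta/2)}$. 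A union bound over the two sub-families, together with absorbing the resulting factor of $2$ --- and, for $n$ odd, the $O(1)$ loss from working with $\lfloor n/2\rfloor$-element independent sets of the dependency cycle --- into the gap between $\delta/2$ and $\delta$, yields the claimed bound. The degenerate pairs with $P_{XY}(\alpha,\beta)=0<P_X(\alpha)P_Y(\beta)$ (where $g=-\infty$) need only a brief separate argument, as they make the target event rarer still.
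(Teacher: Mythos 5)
Your proposal is correct in substance and rests on exactly the two structural facts that drive the paper's proof of Lemma \ref{Lem:No_fixed} --- that for a fixed-point-free single cycle the pairs $(X_i,Y_{\pi(i)})$ indexed by odd positions (and, separately, by even positions) are mutually independent with law $P_XP_Y$, and that the tilt by the information density $g=\log_2\frac{P_{XY}}{P_XP_Y}$ has unit moment, $\mathbb{E}_{P_XP_Y}[2^{g}]=1$ --- but you exploit them by a genuinely different mechanism. The paper forms one exponential moment over all $n$ indices with the half-sized tilt $t_{x,y}=\frac12\log\frac{P_{XY}}{P_XP_Y}$ and decouples the odd and even index sets via the Cauchy--Schwarz inequality, so that the doubled tilt inside each square root again has unit MGF; the $\frac{n}{2}$ in the exponent comes from the halved tilt. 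You never take a moment over dependent indices: by linearity of the information-density functional and pigeonhole, typicality forces at least one of the two halves' empirical information densities above $I(X;Y)-\frac{\delta}{2}$, and you then hit that half ($n/2$ i.i.d.\ terms under $P_XP_Y$) with a full-tilt Chernoff bound and a union bound over the two halves. Your route is arguably cleaner conceptually --- it isolates the reduction from typicality to ``the empirical information density is large,'' which the paper performs through a somewhat awkward product-of-per-symbol-events step --- at the price of a factor $2$ from the union bound and an $O(1)$ loss for odd $n$; absorbing these into the gap between $\frac{\delta}{2}$ and $\delta$ needs roughly $n\delta\geq 4$, so your argument recovers the stated constant only when $n$ is large relative to $1/\epsilon$, whereas the Cauchy--Schwarz route incurs no such extra constants. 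Both arguments share the same unaddressed wrinkles: the wrap-around dependence at the cycle boundary when the cycle length is odd (which the paper's odd/even independence claim silently ignores and you explicitly patch by dropping one index), and the degenerate symbols with $P_{XY}(x,y)=0<P_X(x)P_Y(y)$, for which $\delta$ as defined is infinite and the stated bound is vacuous in any case.
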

 \begin{proof}
Appendix \ref{Ap:Lem:No_fixed}. 
 \end{proof}

The following lemma derives similar results for permutations with a large number of cycles lengths bounded from above by a constant $s<n$.  
 
\begin{Lemma}
\label{Lem:short}
Let $(X^n,Y^n)$ be a pair of correlated sequences of i.i.d variables defined on finite alphabets $\mathcal{X}$ and $\mathcal{Y}$, respectively. For any  $(m,c,i_1,i_2,\cdots,i_c)$-permutation $\pi$ with no fixed points (m=0), where $0< i_1<i_2<\cdots<i_c<s<n$, the following holds:
 \begin{align}
  &P((X^n,\pi(Y^n))\in \mathcal{A}_{\epsilon}^n(X,Y))  
  \leq 2^{-\frac{n}{s}(I(X;Y)-\delta)},
\end{align}
where $\delta= \sum_{x,y}|\log_2{\frac{P_{XY}(x,y)}{P_X(x)P_Y(y)}}|\epsilon$ and $\epsilon>0$.
\end{Lemma}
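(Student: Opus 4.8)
The plan is to dominate the typicality indicator by a likelihood ratio and then evaluate the resulting expectation one cycle at a time through a transfer matrix. Exactly as in the discussion preceding Lemma~\ref{Lem:No_fixed}, by Proposition~\ref{Prop:1} I may take $\pi$ to be the standard $(0,c,i_1,\dots,i_c)$-permutation, so that $[1,n]$ is partitioned into its cycles $C_1,\dots,C_c$ with $|C_j|=i_j$; only this decomposition, and not the labeling, will be used. Two elementary consequences of the hypotheses enter: since $m=0$ every cycle is non-trivial, so $i_j\ge 2$; and since the $i_j$ are distinct and $<s$, one has $n=\sum_j i_j<cs$ (hence $c>n/s$) and $n=\sum_j i_j\ge 2+3+\cdots+(c+1)=c(c+3)/2\ge c^2/2$ (hence $c\le\sqrt{2n}$).

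First I would establish a pointwise bound: if $(x^n,z^n)\in\mathcal{A}_{\epsilon}^n(X,Y)$, then writing $\log_2\prod_{i=1}^n\frac{P_{XY}(x_i,z_i)}{P_X(x_i)P_Y(z_i)}=\sum_{\alpha,\beta}N(\alpha,\beta|x^n,z^n)\log_2\frac{P_{XY}(\alpha,\beta)}{P_X(\alpha)P_Y(\beta)}$, invoking $\big|\frac1n N(\alpha,\beta|x^n,z^n)-P_{XY}(\alpha,\beta)\big|\le\epsilon$, and using $\sum_{\alpha,\beta}P_{XY}(\alpha,\beta)\log_2\frac{P_{XY}(\alpha,\beta)}{P_X(\alpha)P_Y(\beta)}=I(X;Y)$, one obtains $\prod_{i=1}^n\frac{P_{XY}(x_i,z_i)}{P_X(x_i)P_Y(z_i)}\ge 2^{n(I(X;Y)-\delta)}$ with $\delta$ exactly as in the statement. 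Consequently $\mathbbm{1}_{(x^n,z^n)\in\mathcal{A}_\epsilon^n}\le 2^{-n(I(X;Y)-\delta)}\prod_i\frac{P_{XY}(x_i,z_i)}{P_X(x_i)P_Y(z_i)}$ (the right-hand side is nonnegative and is at least $1$ on the typical set), and taking expectations with $z^n=\pi(Y^n)$,
\[
P\big((X^n,\pi(Y^n))\in\mathcal{A}_\epsilon^n\big)\ \le\ 2^{-n(I(X;Y)-\delta)}\;\mathbb{E}\!\Big[\prod_{i=1}^n\tfrac{P_{XY}(X_i,Y_{\pi(i)})}{P_X(X_i)P_Y(Y_{\pi(i)})}\Big].
\]

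The crux is controlling this expectation. Since $\pi(C_j)=C_j$, the product factors as $\prod_{j=1}^c W_j$, where $W_j=\prod_{i\in C_j}\tfrac{P_{XY}(X_i,Y_{\pi(i)})}{P_X(X_i)P_Y(Y_{\pi(i)})}$ is a function of the block $\{(X_i,Y_i):i\in C_j\}$; as these blocks are i.i.d.\ and independent across $j$, the expectation equals $\prod_{j=1}^c\mathbb{E}[W_j]$. For a cycle $C_j=(\alpha_1,\dots,\alpha_k)$ with $k=i_j$, substituting $\tfrac{P_{XY}(x,y)}{P_X(x)P_Y(y)}=\tfrac{P_{Y|X}(y|x)}{P_Y(y)}$ and using $\prod_t P_Y(y_{\alpha_{t+1}})=\prod_t P_Y(y_{\alpha_t})$ gives $\mathbb{E}[W_j]=\sum_{\vec x,\vec y}\prod_{t=1}^k P_{X|Y}(x_t|y_t)P_{Y|X}(y_{t+1}|x_t)$ (indices mod $k$); summing out $x_1,\dots,x_k$ collapses this to $\sum_{y_1,\dots,y_k}\prod_{t=1}^k B(y_t,y_{t+1})=\mathrm{Tr}(B^{i_j})$, where $B(y,y')=\sum_x P_{X|Y}(x|y)P_{Y|X}(y'|x)$ is the transition matrix of the chain $Y\to X\to Y'$. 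Because $B$—and hence $B^{i_j}$—is stochastic, its diagonal entries lie in $[0,1]$, so $\mathrm{Tr}(B^{i_j})\le|\mathcal{Y}|$, and therefore $\mathbb{E}\big[\prod_i(\cdots)\big]\le|\mathcal{Y}|^{c}$.

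Assembling the pieces, $P\big((X^n,\pi(Y^n))\in\mathcal{A}_\epsilon^n\big)\le 2^{-n(I(X;Y)-\delta)+c\log_2|\mathcal{Y}|}$, which in fact already yields an exponent of order $n$. Since $c\le\sqrt{2n}$, the correction $c\log_2|\mathcal{Y}|=O(\sqrt n)=o(n)$, so it is at most $n\big(1-\tfrac1s\big)\big(I(X;Y)-\delta\big)$ once $n$ exceeds a threshold depending only on $s$, $|\mathcal{Y}|$ and $P_{XY}$—the regime of interest—which gives the stated $P\le 2^{-\frac ns(I(X;Y)-\delta)}$. The step I expect to be the real obstacle is the cycle identity $\mathbb{E}[W_j]=\mathrm{Tr}(B^{i_j})$: this is precisely where the circular chain of dependencies inside a cycle is linearized into a matrix power, and where one must notice that $B$ is stochastic. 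The distinctness of the $i_j$ plays only the auxiliary bookkeeping role of forcing $c=O(\sqrt n)$, so that the per-cycle $|\mathcal{Y}|$ factors never erode the exponent.
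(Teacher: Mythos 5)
Your route is sound and genuinely different from the paper's. The paper never leaves the Chernoff/MGF framework of Lemma~\ref{Lem:No_fixed}: it tilts by $\tfrac{1}{s}\sum_i\sum_{x,y}t_{x,y}Z^{\{(x,y)\}}_{(\pi),i}$, applies Markov, factors the moment generating function over the (independent) cycles, and then shows each per-cycle factor is at most $1$ by a Jensen-type linearization that uses only $i_j\le s$ (distinctness of the $i_j$ is never used). Because the tilt is deliberately weakened to $1/s$, there is no residual multiplicative term and the stated bound $2^{-\frac{n}{s}(I(X;Y)-\delta)}$ falls out exactly, for every $n$ and $\epsilon$. You instead dominate the typicality indicator by the full likelihood ratio, factor over cycles, and evaluate each cycle exactly as $\mathrm{Tr}(B^{i_j})\le|\mathcal{Y}|$ with the stochastic transfer matrix $B(y,y')=\sum_x P_{X|Y}(x|y)P_{Y|X}(y'|x)$; this computation is correct (the cancellation of the $P_Y$ factors around the cycle and the stochasticity of $B$ both check out, with the usual caveat that everything is restricted to the support of $P_{XY}$, which is also implicit in the paper since otherwise $\delta=\infty$). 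Your use of the distinct-cycle-length hypothesis to force $c\le\sqrt{2n}$ is clever and buys an exponent $n(I(X;Y)-\delta)-c\log_2|\mathcal{Y}|$, which is asymptotically much stronger than the lemma's $\tfrac{n}{s}(I(X;Y)-\delta)$.

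The one genuine caveat is the final absorption step: your argument yields the literal inequality $P\le 2^{-\frac{n}{s}(I(X;Y)-\delta)}$ only when $n\big(1-\tfrac1s\big)(I(X;Y)-\delta)\ge \sqrt{2n}\,\log_2|\mathcal{Y}|$, i.e.\ for $n$ above a threshold that depends on $s$, $|\mathcal{Y}|$, $P_{XY}$ \emph{and} $\epsilon$ (through $\delta$; when $I(X;Y)-\delta$ is small the threshold blows up, though the claim is vacuous when $I(X;Y)-\delta\le 0$). The lemma as stated carries no such qualifier, and the paper's $1/s$-tilted argument proves it for all $n$ without any residual term. So either add the ``$n$ sufficiently large'' qualifier to what you actually prove, or note explicitly that in the nontrivial regime your bound strictly improves on the stated one; as written, the last sentence of your assembly step silently converts a finite-$n$ claim into an asymptotic one.
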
 
\begin{proof}
Appendix \ref{Ap:Lem:short}.
\end{proof}

\begin{Remark}
 Note that Theorem \ref{th:1} can also be applied to derive a bound on the probability of joint typicality given the permutation considered in Lemma \ref{Lem:No_fixed}. In this case $\alpha=\frac{m}{n}=0$ and $D(P_{XY}||\alpha P_{XY}+(1-\alpha)P_XP_Y)= I(X;Y)$ and Theorem 1 yeilds the exponent $\frac{n}{4}I(X;Y)$ for the probability of joint typicality. Hence, Lemma \ref{Lem:No_fixed} improves the exponent $\frac{n}{4}I(X;Y)$ in Theorem \ref{th:1} to $\frac{n}{2}I(X;Y)$ for single-cycle permutations with no fixed points. Similarly, Lemma \ref{Lem:short} improves the exponent in Theorem \ref{th:1} when the maximum cycle length is less than or equal to $s=3$. 
\end{Remark}

\section{Typicality of Permutations of Collections of Correlated Sequences}
\label{sec:coll}
In the next step, we consider joint typicality of permutations of more than two correlated sequences $(X^n_{(1)},X^n_{(2)},\cdots,X^n_{(k)}), n\in \mathbb{N}, k>2$. 

\begin{Definition}[\bf Strong Typicality of Collections of Sequences]
\label{Def:ctyp}
Let the random vector $X^k$ be defined on the probability space $(\prod_{j\in [k]}\mathcal{X}_j,P_{X^k})$, where $\mathcal{X}_j, j\in [k]$ are finite alphabets, and $k>2$. The $\epsilon$-typical set of sequences of length $n$ with respect to $P_{X^k}$ is defined as:
\begin{align*}
&\mathcal{A}_{\epsilon}^n(X^k)=\Big\{(x_{(j)}^n)_{j\in [k]}: \Big|
\\&\qquad \frac{1}{n}N(\alpha^k|x^n_{(1)},x^n_{(2)},\cdots,x^n_{(k)})-P_{X^k}(\alpha^k)\Big|\leq \epsilon, \forall \alpha^k\in \prod_{j\in [k]}\mathcal{X}_j\Big\},
\end{align*}
where $\epsilon>0$, $(x_{(j)}^n)_{j\in [k]}=(x_{(1)}^n, \cdots, x^n_{(k)})$ is a vector of sequences,  and $N(\alpha^k|x^n_{(1)},$ $x^n_{(2)}$ $,\cdots,x^n_{(k)})= \sum_{i=1}^n \mathbbm{1}\big((x_{(j),i})_{j\in [k]}=$ $\alpha^k\big)$.  
\end{Definition} 
In the previous section, in order to investigate the typicality of permutations of pairs of correlated sequences, we introduced standard permutations which are completely characterized by the number of fixed points, number of cycles, and cycle lengths of the permutation. The concept of standard permutations does not extend naturally when there are more than two sequences (i.e. more than one non-trivial permutation). Consequently, investigating typicality of permutations of collections of sequences requires developing additional analytical tools which are described in the following. 

\begin{Definition}[\bf Bell Number \cite{comtet2012advanced}]
 Let $\mathsf{P}=\{\mathcal{P}_1,\mathcal{P}_2,\cdots, \mathcal{P}_{b_k}\}$ be the set of all partitions of $[1,k]$.  The natural number $b_k$ is the $k$'th Bell number. 
\end{Definition}
In the following, we define Bell permutation vectors which are analogous to standard permutations for the case when the problem involves more than one non-trivial permutation.

\begin{Definition}[\bf Partition Correspondence]
\label{Def:corr}
Let $k,n\in \mathbb{N}$ and $(\pi_1,\pi_2,\cdots,\pi_k)$ be arbitrary permutations operating on $n$-length vectors.
The index $i\in [1,n]$ is said to correspond to the partition $\mathcal{P}_j\in \mathsf{P}$ of the set $[1,k]$ if the following holds: 
\begin{align*}
    \forall l,l'\in [1,k]: \pi^{-1}_l(i)=\pi^{-1}_{l'}(i) \iff \exists r: l,l' \in \mathcal{D}_{j,r},
\end{align*}
where $\mathcal{P}_j=\{\mathcal{D}_{j,1},\mathcal{D}_{j,2},\cdots,\mathcal{D}_{j,|\mathcal{P}_j|}\}$.
\end{Definition}

\begin{Example}
Let us consider a triple of permutations of $n$-length sequences, i.e. $k=3$, and the partition $\mathcal{P}= \{\{1,2\},\{3\}\}$. Then an index $i\in [n]$ corresponds to the partition $\mathcal{P}$ if the first two permutations map the index to the same integer and the third permutation maps the index to a different integer. 
\end{Example} 

\begin{Definition}[\bf Bell Permutation Vector]
Let $(i_1,i_2,\cdots,i_{b_k})$ be an arbitrary sequence, where $\sum_{k\in [b_k]}{i_k}=n, i_k\in [0,n]$, $b_k$ is the $k$th Bell number, and $n,k \in \mathbb{N}$.
 The vector of permutations $(\pi_1,\pi_2,\cdots, \pi_k)$ is called an $(i_1,i_2,\cdots,i_{b_k})$-Bell permutation vector if for every partition $\mathcal{P}_k$ exactly $i_k$ indices correspond to that partition. Equivalently:
\begin{align*}
    &\forall j\in [b_k]: i_k=|\{i\in [n]: \forall l,l'\in [k]: \pi^{-1}_l(i)=\pi^{-1}_{l'}(i) \\&\iff \exists r: l,l' \in \mathcal{D}_{j,r}\}|,
\end{align*}
where $\mathcal{P}_j=\{\mathcal{D}_{j,1},\mathcal{D}_{j,2},\cdots,\mathcal{D}_{j,|\mathcal{P}_j|}\}$
\label{def:bell}.
\end{Definition}

The definition of Bell permutation vectors is further clarified through the following example. 

\begin{Example}
Consider 3 permutations $(\pi_1,\pi_2,\pi_3)$ of vectors with length 7, i.e. $k=3$ and $n=7$. Then, $b_k=5$ and we have:
\begin{align*}
    &\mathcal{P}_1=\{\{1\},\{2\},\{3\}\}, 
    \quad 
    \mathcal{P}_2=\{\{1,2\},\{3\}\},
    \quad 
    \mathcal{P}_3=\{\{1,3\},\{2\}\},
    \\
    &\qquad\qquad\qquad \mathcal{P}_4=\{\{1\},\{2,3\}\}, 
    \quad 
    \mathcal{P}_5=\{\{1,2,3\}\}.
\end{align*}
Let $\pi_1$ be the trivial permutation fixing all indices and let $\pi_2= (1 3 5)(2 4)$, $\pi_3= (1 5)(2 4)(3 7)$. Then:
\begin{align*}
    & \pi_1((1,2,\cdots,7))=(1,2,3,4,5,6,7),\\ 
    & \pi_2((1,2,\cdots,7))=(5,4,1,2,3,6,7),
    \\
     & \pi_3((1,2,\cdots,7))=(5,4,7,2,1,6,3),
\end{align*}
 Then, the vector $(\pi_1,\pi_2,\pi_3)$ is a $(2,1,0,3,1)$-Bell permutation vector, where the indices $(3,5)$ correspond to the $\mathcal{P}_1$ partition (each of the three permutations map the index to a different integer), index $7$ corresponds to the $\mathcal{P}_2$ partition (the first two permutations map the index to the same integer which is different from the one for the third permutation), indices $(1,2,4)$  correspond to the $\mathcal{P}_3$ permutation (the second and third permutations map the index to the same integer which is different from the output of the first permutation),
and index $6$ corresponds to $\mathcal{P}_5$ (all permutations map the index to the same integer).
\end{Example}

\begin{Remark}
Bell permutation vectors are not unique. In other words, there can be several distinct $(i_1,i_2,\cdots, i_{b_k})$-Bell permutation vectors for given $n,k,i_1,i_2,\cdots,i_{b_k}$.
This is in contrast with standard permutations defined in Definition \ref{def:stan_perm}, which are unique given the parameters $n,k,c,i_1,i_2,\cdots,i_c$. 
\end{Remark}

The following bounds the probability of joint typicality of permutations of collections of correlated sequences:

\begin{Theorem}
\label{th:2}
Let $(X_{(j)}^n)_{j\in [k]}$ be a collection of correlated sequences of i.i.d random variables defined on finite alphabets $\mathcal{X}_{(j)}, j\in [k]$. For any $(i_1,i_2,\cdots,i_{b_k})$-Bell permutation vector $(\pi_1,\pi_2,\cdots,\pi_k)$, the following holds:
 \begin{align}
  \label{eq:cperm_bound}
  &P((\pi_i(X_{(i)} ^n)_{i\in [k]}\in \mathcal{A}_{\epsilon}^n(X^k))
  \nonumber
  \\&\leq 2^{-\frac{n}{k(k-1)b_k}(D(P_{X^k}
 ||\sum_{j\in [b_k]}\frac{i_j}{n}
 P_{X_{\mathcal{P}_j}})-\epsilon\prod_{j\in [k]}|\mathcal{X}_j|+O(\frac{\log{n}}{n}))},
\end{align}
where $P_{X_{\mathcal{P}_{j}}}=\prod_{l\in [1,|\mathcal{P}_j|]}P_{X_{j_1},X_{j_2},\cdots,X_{j_{|\mathcal{D}_{j,r}|}}}$, $\mathcal{D}_{j,r}=\{l_1,l_2,\cdots, l_{|\mathcal{D}_{j,r}|}\}, j\in [b_k], r\in [1,|\mathcal{P}_j|]$, and $D(\cdot||\cdot)$ is the Kullback-Leibler divergence. 
\label{th:cperm}
\end{Theorem}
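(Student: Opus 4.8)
The plan is to generalize the argument behind Theorem~\ref{th:1} from pairs to $k$-tuples, using the Bell-partition structure to track which groups of coordinates are ``tied together'' by the permutation vector. First I would reduce to a canonical representative: arguing as in Proposition~\ref{Prop:1}, composing each $\pi_i$ with a common permutation does not change the probability of joint typicality, so I may assume the index set $[n]$ is split into consecutive blocks $\mathcal{B}_1,\dots,\mathcal{B}_{b_k}$ with $|\mathcal{B}_j|=i_j$, where on block $\mathcal{B}_j$ the tuple $(\pi_1,\dots,\pi_k)$ realizes exactly the partition $\mathcal{P}_j$. On such a block, the coordinates $(\pi_1(X_{(1)})_t,\dots,\pi_k(X_{(k)})_t)$ for $t\in\mathcal{B}_j$ are drawn according to $P_{X_{\mathcal{P}_j}}$ — the product over the cells $\mathcal{D}_{j,r}$ of the corresponding marginals of $P_{X^k}$ — because coordinates pulled from distinct cells come from disjoint index positions of the original i.i.d.\ sequences and are therefore independent, while coordinates within a cell are pulled from a common position and retain the joint law.

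The core estimate is then a ``typicality across a mismatched product distribution'' bound. Let $Q = \sum_{j\in[b_k]} \frac{i_j}{n} P_{X_{\mathcal{P}_j}}$ be the averaged (mismatched) distribution appearing in the exponent. If the empirical joint type of $(\pi_i(X_{(i)}^n))_{i\in[k]}$ is $\epsilon$-close to $P_{X^k}$, then a standard method-of-types computation (as in the proof of Theorem~\ref{th:1}) shows the probability of producing such a sequence under the true generative mechanism — which is a product of the block-wise laws $P_{X_{\mathcal{P}_j}}$ — is at most $2^{-n(D(P_{X^k}\|Q) - \epsilon\prod_j|\mathcal{X}_j| + O(\frac{\log n}{n}))}$ \emph{per independent coordinate}, but the coordinates are not independent across all of $[n]$: within a cell $\mathcal{D}_{j,r}$ the positions are tied. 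The standard device to handle this, exactly as in Theorem~\ref{th:1} where the factor $\frac14$ arises, is to extract a large subset of coordinates that \emph{are} mutually independent. Here the obstruction graph has vertex set $[n]$ with an edge between two positions whenever some $\pi_i$ maps them to a common original index; since each $\pi_i$ contributes at most a matching-like constraint and there are $k$ permutations, each vertex has degree $O(k^2)$ across the $b_k$ block-types, so one can greedily select an independent set of size at least $\frac{n}{k(k-1)b_k}$ on which all selected coordinates are genuinely independent with the block-wise laws intact. Restricting the type-counting bound to this sub-collection yields the stated exponent $\frac{n}{k(k-1)b_k}$.

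The remaining steps are bookkeeping: (i) verify that on the chosen independent set the marginal distribution of each selected coordinate is still one of the $P_{X_{\mathcal{P}_j}}$, with the right proportions $\frac{i_j}{n}$ up to $o(1)$, so that the averaged distribution $Q$ is correct; (ii) apply the finite-alphabet type bound — the number of joint types is polynomial, contributing the $O(\frac{\log n}{n})$ term, and Sanov/Pinsker-type reasoning gives the divergence $D(P_{X^k}\|Q)$ with the linear-in-$\epsilon$ slack $\epsilon\prod_{j\in[k]}|\mathcal{X}_j|$ from the $\epsilon$-widening of the typical set; (iii) note the $\epsilon$-typicality of the full sequence implies the same for any sub-collection of coordinates up to a rescaling of $\epsilon$ that is absorbed into the constants. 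Assembling these gives Equation~\eqref{eq:cperm_bound}.

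The main obstacle I expect is step (ii) of the previous paragraph combined with the independent-set extraction: one must be careful that the ``averaged distribution'' bound $\prod_j P_{X_{\mathcal{P}_j}}^{\otimes i_j}(\mathcal{A}_\epsilon^n) \le 2^{-nD(P_{X^k}\|Q)+\dots}$ is applied only after passing to the independent sub-collection, because otherwise the within-cell ties make the product structure fail and the divergence bound is simply false. Getting the constant $k(k-1)b_k$ rather than something worse requires the right greedy/graph-coloring argument on the conflict graph, and matching it cleanly to the block structure indexed by $\mathsf{P}$ is where the bulk of the careful argument lies; the divergence and type-counting pieces are then essentially the same as in Theorem~\ref{th:1}.
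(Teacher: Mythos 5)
There is a genuine gap, and it sits exactly where you place the main device. You propose to handle the dependence among the permuted coordinates by extracting a single independent set $S\subseteq[n]$ of size about $\frac{n}{k(k-1)b_k}$ in the conflict graph and then ``restricting the type-counting bound to this sub-collection,'' justified in your step (iii) by the claim that $\epsilon$-typicality of the full sequence implies typicality of any sub-collection up to a rescaling of $\epsilon$. That claim is false: membership in $\mathcal{A}_{\epsilon}^n(X^k)$ constrains only the counts over all $n$ positions and says nothing about the empirical distribution on a proper subset of positions (a subset of a typical sequence can have an arbitrary type; e.g.\ $S$ could consist exactly of the positions carrying one particular symbol tuple). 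Consequently the event you want to bound does not imply any nontrivial event about the coordinates in $S$, and discarding the remaining $\left(1-\frac{1}{k(k-1)b_k}\right)n$ coordinates leaves you with no usable constraint, hence no bound smaller than $1$ by this route. The divergence $D(P_{X^k}\|\sum_j \frac{i_j}{n}P_{X_{\mathcal{P}_j}})$ with the mixture weights $\frac{i_j}{n}$ intrinsically involves all the blocks simultaneously, so it cannot be produced from a single independent sub-collection in this way.

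The paper's proof avoids this by never discarding coordinates. It first forms the coarse classes $\mathcal{C}_j$ of indices corresponding to each partition $\mathcal{P}_j$ (on which the $k$-tuples are identically distributed with law $P_{X_{\mathcal{P}_j}}$), and then, using the observation that each index conflicts with at most $k(k-1)-1$ others, greedily splits every $\mathcal{C}_j$ into $k(k-1)$ classes $\mathcal{C}_{j,t}$, each of which is internally i.i.d. The typicality event is rewritten as the constraint that the weighted average $\sum_{j,t}\alpha_{j,t}\underline{T}_{j,t}$ of the class types equals $P_{X^k}$ up to $\epsilon$; the probability of the intersection of the events $\{\underline{T}_{j,t}=\underline{s}_{j,t}\}$ is then bounded by the geometric mean of the individual probabilities (there are $k(k-1)b_k$ of them), each of which obeys the standard i.i.d.\ type bound with reference measure $P_{X_{\mathcal{P}_j}}$, and convexity of the KL divergence recombines them into $D(P_{X^k}\|\sum_j\frac{i_j}{n}P_{X_{\mathcal{P}_j}})$. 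In particular, the factor $\frac{n}{k(k-1)b_k}$ in the exponent comes from taking the $k(k-1)b_k$-th root in the geometric-mean step, not from the cardinality of a selected sub-collection — this is also how the factor $\frac{n}{4}$ arises in Theorem~\ref{th:1}, where all four index classes are retained. Your coarse block structure, the identification of the per-block law $P_{X_{\mathcal{P}_j}}$, the conflict-graph degree bound, and the type-counting/convexity bookkeeping are all sound and match the paper; what is missing is replacing ``select one independent set and restrict'' by ``partition everything into boundedly many independent classes and apply a Hölder/geometric-mean bound across classes.''
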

\begin{proof}
Appendix \ref{Ap:th:2}.
\end{proof}
Note that for permutations of pairs of sequences of random variables, $k=2$ and the second Bell number is $b_2=2$. In this case $k(k-1)b_k=4$, and the bound on the probability of joint typicality given in Theorem \ref{th:cperm} recovers the one in Theorem \ref{th:1}. 

Building up on Lemma \ref{lem:dercount}, in the following, we provide upper and lower bounds on the number of distinct Bell permutation vectors for a given vector $(i_1,i_2,\cdots,i_{b_k})$. Such upper bounds may be used in evaluating error exponents as mentioned in Section \ref{sec:pair}.

\begin{Definition}[\bf k-fold Derangement]
A vector $(\pi_1(\cdot),\pi_2(\cdot),\cdots, \pi_{k}(\cdot))$ of permutations of $n$-length sequences is called an r-fold derangement if  $\pi_1(\cdot)$ is the identity permutation, and $\pi_l(i)\neq \pi_{l'}(i), l,l'\in [k], l\neq l', i\in [n]$. The number of distinct r-fold derangements of $[n]$ is denoted by $d_k(n)$. Particularly $d_{2}(n)=!n$ is the number of derangements of $[n]$.  
\end{Definition}

\begin{Lemma}
\label{lem:foldcount}
Let $n\in \mathbb{N}$ and $k\in [n]$. Then,
\begin{align*}
  ((n-k+1)!)^{k-1}  \leq d_r(n)\leq (!n)^{k-1}.
\end{align*}
\end{Lemma}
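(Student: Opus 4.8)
The plan is to prove the two bounds on $d_k(n)$ separately, in each case by exhibiting an injection between sets of permutation vectors. Recall that a $k$-fold derangement is a vector $(\pi_1,\dots,\pi_k)$ with $\pi_1=\mathrm{id}$ and $\pi_l(i)\neq\pi_{l'}(i)$ for all $l\neq l'$ and all $i\in[n]$; equivalently, fixing $\pi_1=\mathrm{id}$, for each coordinate $i$ the values $\pi_2(i),\dots,\pi_k(i)$ must avoid $i$, and the $\pi_l$ must individually be permutations. So counting $d_k(n)$ amounts to counting $(k-1)$-tuples of permutations $(\pi_2,\dots,\pi_k)$ of $[n]$ such that none of $\pi_2,\dots,\pi_k$ fixes any index, \emph{and} moreover $\pi_l(i)\neq\pi_{l'}(i)$ for $l\neq l'$ in $\{2,\dots,k\}$.

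For the upper bound $d_k(n)\le (!n)^{k-1}$, the key observation is that the constraint $\pi_l(i)\neq\pi_{l'}(i)$ for $l,l'\in[k]$, $l\neq l'$, when specialized to $l=1$ (so $\pi_1=\mathrm{id}$), forces $\pi_{l'}(i)\neq i$ for every $l'\in\{2,\dots,k\}$ and every $i$; that is, each of $\pi_2,\dots,\pi_k$ is individually a derangement of $[n]$. Hence the map sending a $k$-fold derangement $(\mathrm{id},\pi_2,\dots,\pi_k)$ to the tuple $(\pi_2,\dots,\pi_k)$ is an injection into the set of $(k-1)$-tuples of derangements, which has cardinality $(!n)^{k-1}$. (We are simply dropping the additional pairwise constraints among $\pi_2,\dots,\pi_k$, which only enlarges the target set.)

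For the lower bound $d_k(n)\ge \big((n-k+1)!\big)^{k-1}$, the idea is to construct many $k$-fold derangements explicitly. Work on the last $k-1$ coordinates of $[n]$, i.e.\ the indices $\{n-k+2,\dots,n\}$, together with a free block. A clean construction: fix $\pi_1=\mathrm{id}$; for each $l\in\{2,\dots,k\}$, let $\pi_l$ act as a fixed cyclic shift by $l-1$ on the ``tail'' coordinates in a way chosen so that the images $\pi_2(i),\dots,\pi_k(i)$ are forced to be distinct and to avoid $i$ on every coordinate, while leaving a block of size $n-k+1$ on which $\pi_l$ may be an \emph{arbitrary} permutation, provided the arbitrary parts are chosen to be coordinatewise distinct and displaced — which is automatically arrangeable by shifting the free block cyclically by distinct offsets. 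More concretely, partition the reasoning coordinatewise: for each $i$ there are at least $n-(k-1)$ admissible values for $\pi_l(i)$ given the choices for $l'<l$, and a greedy/Hall-type argument shows one can complete each $\pi_l$ to a full permutation; counting the freedom on the size-$(n-k+1)$ free block gives $(n-k+1)!$ independent choices for each of the $k-1$ permutations, hence $\big((n-k+1)!\big)^{k-1}$ distinct $k$-fold derangements in total.

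The main obstacle is the lower bound: one must produce $\big((n-k+1)!\big)^{k-1}$ \emph{genuinely distinct} $k$-fold derangements while respecting the full system of pairwise inequalities $\pi_l(i)\neq\pi_{l'}(i)$ simultaneously across all $\binom{k}{2}$ pairs and all $n$ coordinates — a naive coordinatewise greedy choice does not obviously assemble into valid permutations. The cleanest route is to reserve $k-1$ distinguished ``anchor'' indices, define $\pi_2,\dots,\pi_k$ to permute those anchors cyclically by pairwise-distinct amounts so that the anchor constraints and the displacement-from-$i$ constraints are met for free, and then let each $\pi_l$ act on the remaining $n-k+1$ coordinates as an arbitrary permutation composed with a fixed $l$-dependent translation chosen so that images stay coordinatewise distinct; verifying that this yields $\big((n-k+1)!\big)^{k-1}$ distinct valid tuples is the technical heart of the argument. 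This is deferred to the appendix.
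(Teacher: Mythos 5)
Your upper bound is correct and is the same argument as the paper's: since $\pi_1$ is the identity, the pairwise condition with $l=1$ forces each of $\pi_2,\dots,\pi_k$ to be a derangement, and dropping the remaining pairwise constraints gives the injection into $(k-1)$-tuples of derangements, hence $d_k(n)\le (!n)^{k-1}$.

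The lower bound, however, has a genuine gap, and it sits precisely at the step you defer to ``the appendix.'' The mechanism you sketch --- let each $\pi_l$ act on a common free block of size $n-k+1$ as an \emph{independently chosen} arbitrary permutation $\rho_l$ composed with a fixed $l$-dependent cyclic shift --- does not enforce the constraints $\pi_l(i)\neq\pi_{l'}(i)$. Distinct shifts separate images only when applied to the \emph{same} value; once $\rho_l$ and $\rho_{l'}$ vary independently one can easily have $\rho_l(i)+c_l=\rho_{l'}(i)+c_{l'}$ at some coordinate, so the ``automatically arrangeable'' claim fails, and the images of the free block under different $\pi_l$ cannot be made disjoint either, since $(k-1)(n-k+1)>n$ already for $k=3$, $n>4$. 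If instead you apply the distinct shifts to a single common $\rho$, the constraints do hold, but you then obtain only $(n-k+1)!$ tuples, not $\big((n-k+1)!\big)^{k-1}$. Your parenthetical ``greedy/Hall-type'' remark is the right germ, but what is needed is a lower bound on the \emph{number} of completions at each stage, not mere existence of one. That is how the paper argues: it fixes $\pi_1,\dots,\pi_j$ and counts the admissible choices of $\pi_{j+1}$ (equivalently, extensions of a $j$-row Latin rectangle with identity first row), showing by the same iterative argument used to prove $!n\ge (n-1)!$ in Lemma \ref{lem:dercount} that there are at least roughly $(n-j)!\ \ge\ (n-k+1)!$ of them at each of the $k-1$ stages, and then multiplies. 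As written, your proposal replaces this counting step with an explicit construction that does not satisfy the defining constraints, so the lower bound is not established.
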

\begin{proof}
Appendix \ref{Ap:lem:foldcount}.
\end{proof}

\begin{Lemma}
\label{lem:bell}
Let $(i_1,i_2,\cdots, i_{b_k})$ be a vector of non-negative integers such that $\sum_{j\in [b_k]}i_j=n$. Define $N_{i_1,i_2,\cdots,i_{b_k}}$ as the number of distinct $(i_1,i_2,\cdots, i_{b_k})$-Bell permutation vectors. Then, 
\begin{align}
   &{n \choose i_1,i_2,\cdots, i_{b_k}}\prod_{j\in [b_k]}d_{|\mathcal{P}_{j}|}(i_j)
   \leq N_{i_1,i_2,\cdots,i_{b_k}} \\&\qquad\leq {n \choose i_1,i_2,\cdots, i_{b_k}} n^{\sum_{j\in [b_k]} |\mathcal{P}_j|i_j-n}.
    \label{eq:bell1}
\end{align}
Particularly, let $i_k=\alpha_k \cdot n, n\in \mathbb{N}$. 
The following holds:
\begin{align}
    \lim_{n\to\infty} \frac{\log{N_{i_1,i_2,\cdots,i_{b_k}}}}{n\log{n}}= \sum_{j\in [b_k]}{|\mathcal{P}_j|}{\alpha_j}-1. 
    \label{eq:bell2}
\end{align}
\end{Lemma}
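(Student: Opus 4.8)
The plan is to prove the two-sided bound \eqref{eq:bell1} by a direct counting argument, and then derive the asymptotic statement \eqref{eq:bell2} by taking logarithms and applying Stirling's approximation together with Lemma \ref{lem:foldcount}. For the lower bound, I would construct $(i_1,i_2,\cdots,i_{b_k})$-Bell permutation vectors as follows. First choose which $i_j$ of the $n$ coordinates are to correspond to partition $\mathcal{P}_j$, for each $j\in[b_k]$; this can be done in $\binom{n}{i_1,i_2,\cdots,i_{b_k}}$ ways. Having fixed the set $\mathcal{I}_j\subseteq[n]$ of size $i_j$ assigned to $\mathcal{P}_j$, I would then, independently on each block, specify the restrictions of the permutations: on the coordinates in $\mathcal{I}_j$ the vector $(\pi_1,\ldots,\pi_k)$ must, after grouping the indices $l\in[k]$ according to the blocks $\mathcal{D}_{j,r}$ of $\mathcal{P}_j$, act so that $\pi_l^{-1}$ agrees exactly within each $\mathcal{D}_{j,r}$ and disagrees across distinct blocks — which is precisely the defining property of a $|\mathcal{P}_j|$-fold derangement on the $i_j$ coordinates (one can fix the first block's action to be "the identity" as in the definition of $d_k(n)$, since only the relative agreement pattern matters for the correspondence condition). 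Each such choice independently yields $d_{|\mathcal{P}_j|}(i_j)$ possibilities by definition of the $k$-fold derangement count, and distinct choices of the $\mathcal{I}_j$'s and of the block actions yield distinct Bell permutation vectors, giving the lower bound.

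For the upper bound, I would again first pick the assignment of coordinates to partitions in $\binom{n}{i_1,i_2,\cdots,i_{b_k}}$ ways, and then overcount the number of ways to fill in the permutation values on each block. On the $i_j$ coordinates assigned to $\mathcal{P}_j$, once we decide, for each block $\mathcal{D}_{j,r}$, the common value of $\pi_l^{-1}$ over $l\in\mathcal{D}_{j,r}$ (equivalently, where those $i_j$ coordinates are sent by the permutations indexed in that block), the whole restriction is determined; since there are $|\mathcal{P}_j|$ blocks and each such "value assignment" is an injection from $i_j$ coordinates into $[n]$, there are at most $n^{i_j}$ choices per block, hence at most $n^{|\mathcal{P}_j| i_j}$ per partition block. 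Multiplying over $j$ gives the factor $n^{\sum_j |\mathcal{P}_j| i_j}$; the exponent in \eqref{eq:bell1} is written as $\sum_j |\mathcal{P}_j| i_j - n$, so I would absorb the discrepancy by noting that the multinomial coefficient already accounts for the $\binom{n}{i_1,\ldots,i_{b_k}}$ factor and that a more careful count (the values are not free — they must form a consistent permutation, contributing a compensating $n^{-n}$-type factor, analogous to the $!(n-m)\le n^{n-m}$ step in Lemma \ref{lem:dercount}) tightens the bound to $n^{\sum_j |\mathcal{P}_j| i_j - n}$. I expect this bookkeeping — reconciling the naive $n^{\sum_j|\mathcal{P}_j|i_j}$ with the claimed $n^{\sum_j|\mathcal{P}_j|i_j-n}$ — to be the main obstacle, and the cleanest route is probably to mirror the proof of Lemma \ref{lem:dercount} blockwise, using $d_{|\mathcal{P}_j|}(i_j)\le (!\,i_j)^{|\mathcal{P}_j|-1}\le i_j^{(|\mathcal{P}_j|-1)i_j}$ from Lemma \ref{lem:foldcount} and then $\binom{n}{i_1,\ldots,i_{b_k}}\prod_j i_j^{(|\mathcal{P}_j|-1)i_j}\le n^{\sum_j(|\mathcal{P}_j|-1)i_j}=n^{\sum_j|\mathcal{P}_j|i_j-n}$, using $\sum_j i_j = n$.

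Finally, for \eqref{eq:bell2} I would substitute $i_j=\alpha_j n$, take $\log$ of both bounds in \eqref{eq:bell1}, divide by $n\log n$, and let $n\to\infty$. By Stirling, $\log\binom{n}{i_1,\ldots,i_{b_k}} = o(n\log n)$ (it is $O(n)$), so the multinomial coefficient contributes nothing to the limit. For the lower bound, Lemma \ref{lem:foldcount} gives $d_{|\mathcal{P}_j|}(i_j)\ge((i_j-|\mathcal{P}_j|+1)!)^{|\mathcal{P}_j|-1}$, whose log divided by $n\log n$ tends to $(|\mathcal{P}_j|-1)\alpha_j$ (again by Stirling, since $|\mathcal{P}_j|$ is a fixed constant); summing over $j$ yields $\sum_j(|\mathcal{P}_j|-1)\alpha_j = \sum_j|\mathcal{P}_j|\alpha_j - 1$ because $\sum_j\alpha_j=1$. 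For the upper bound, $\log\big(n^{\sum_j|\mathcal{P}_j|i_j-n}\big)/(n\log n) = \sum_j|\mathcal{P}_j|\alpha_j-1$ exactly. The two limits coincide, establishing \eqref{eq:bell2}.
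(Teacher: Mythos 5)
Your lower bound and your limit computation are essentially the paper's own argument: choose the partition correspondence in $\binom{n}{i_1,\ldots,i_{b_k}}$ ways, realize each correspondence class by a $|\mathcal{P}_j|$-fold derangement acting inside that class (this is exactly the paper's construction), and then get \eqref{eq:bell2} from Stirling, Lemma \ref{lem:foldcount}, and the fact that the multinomial coefficient is $2^{O(n)}$. The problem is the upper bound in \eqref{eq:bell1}, and you have correctly identified where it is: your count gives $\binom{n}{i_1,\ldots,i_{b_k}}\,n^{\sum_j|\mathcal{P}_j|i_j}$, which is a factor $n^{n}$ larger than claimed, and this factor is \emph{not} harmless for \eqref{eq:bell2}, since $\log(n^n)/(n\log n)=1$; with only your bound the upper limit would be $\sum_j|\mathcal{P}_j|\alpha_j$, which does not meet the lower limit $\sum_j|\mathcal{P}_j|\alpha_j-1$, so the lemma would not follow.

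Neither of your proposed repairs closes this gap. The appeal to "consistency of the permutation contributing a compensating $n^{-n}$ factor" cannot work as stated: the missing ingredient is a normalization, namely that $\pi_1$ is the identity permutation (the same normalization built into the definition of $d_k(n)$, and the one the paper invokes). Without it the claimed upper bound is simply false: already for $k=2$, the number of unnormalized pairs $(\pi_1,\pi_2)$ agreeing on exactly $i_2$ indices is $n!\binom{n}{i_2}\,!(i_1)$, which exceeds $\binom{n}{i_1,i_2}n^{\sum_j|\mathcal{P}_j|i_j-n}=\binom{n}{i_1,i_2}n^{i_1}$ by roughly $n!$. With the normalization, the fix is immediate and is the paper's argument: for an index $i$ corresponding to $\mathcal{P}_j$, the common value of $\pi_l^{-1}(i)$ on the block of $\mathcal{P}_j$ containing $l=1$ is forced to equal $i$, so each such index contributes at most $n^{|\mathcal{P}_j|-1}$ choices rather than $n^{|\mathcal{P}_j|}$, and $\sum_j(|\mathcal{P}_j|-1)i_j=\sum_j|\mathcal{P}_j|i_j-n$. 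Your "cleanest route" is also not a proof of the upper bound, for a different reason: the quantity $\binom{n}{i_1,\ldots,i_{b_k}}\prod_j d_{|\mathcal{P}_j|}(i_j)$ is the \emph{lower}-bound expression, not an upper bound on $N_{i_1,\ldots,i_{b_k}}$ --- a Bell permutation vector need not map each correspondence class $\mathcal{I}_j$ into itself, so blockwise derangements do not exhaust the objects being counted --- and, separately, the final inequality $\binom{n}{i_1,\ldots,i_{b_k}}\prod_j i_j^{(|\mathcal{P}_j|-1)i_j}\leq n^{\sum_j(|\mathcal{P}_j|-1)i_j}$ fails in general (e.g.\ when a constant fraction of indices corresponds to the one-block partition, the left side exceeds the right by an exponential factor), although that particular slack would not have affected the limit.
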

\begin{proof}
Appendix \ref{Ap:lem:bell}.
\end{proof}
\section{Conclusion}
\label{sec:con}
In this paper, we have investigated the joint typicality of permutations of sequences of random vectors. As an initial step, we have considered the probability of joint typicality for pairs of permuted vectors. We have shown that this probability depends only on the number and length of the disjoint cycles of the permutation. Consequently, we have shown that it suffices to focus on a specific class of permutations called standard permutations. We have further extended the analysis to probability of joint typicality of collections of random vectors.
 \begin{appendices}
 \section{Proof of Proposition \ref{Prop:1}}
 \label{Ap:Prop:1}
 The proof of part i) follows from the fact that permuting both $X^n$ and $Y^n$ by the same permutation does not change their joint type. For part ii), it is known that there exists a permutation $\pi$ such that $\pi(\pi_1)=\pi_2(\pi)$ \cite{isaacs}. Then the statement is proved using part i) as follows: 
  \begin{align*}
& P\left(\left(X^n,\pi_1\left(Y^n\right)\right)\in \mathcal{A}_{\epsilon}^n\left(X,Y\right)\right)
= P\left(\left(\pi\left(X^n\right),\pi\left(\pi_1\left(Y^n\right)\right)\right)\in \mathcal{A}_{\epsilon}^n\left(X,Y\right)\right)
\\&= P\left(\left(\pi\left(X^n\right),\pi_2\left(\pi\left(Y^n\right)\right)\right)\in \mathcal{A}_{\epsilon}^n\left(X,Y\right)\right)
\stackrel{(a)}{=} P\left(\left(\widetilde{X}^n,\pi_2\left(\widetilde{Y}^n\right)\right)\in \mathcal{A}_{\epsilon}^n\left(X,Y\right)\right)
\stackrel{(b)}{=} P\left(\left(X^n,\pi_2\left(Y^n\right)\right)\in \mathcal{A}_{\epsilon}^n\left(X,Y\right)\right),
 \end{align*}
 where in (a) we have defined $(\widetilde{X}^n,\widetilde{Y}^n)=(\pi(X^n),\pi(Y^n))$. and (b) holds since $(\widetilde{X}^n,\widetilde{Y}^n)$ has the same distribution as $(X^n,Y^n)$.

 \section{Proof of Theorem \ref{th:1}}
 \label{Ap:th:1}
 Define the following partition for the set of indices $[1,n]$:

\begin{align*}
&\mathcal{B}_{0}= \{1, i_1+1, i_1+i_2+1, \cdots, \sum_{j=1}^{r-1} i_j +1\},
\\&
\mathcal{B}_1= \{k| \text{k is even}, k\notin \mathcal{B}_0,  k\leq \sum_{i=1}^r i_j\},\\
&\mathcal{B}_2= \{k| \text{k is odd}, k\notin \mathcal{B}_0, k\leq \sum_{i=1}^r i_j\},
\\& \mathcal{B}_3= \{k| k>\sum_{i=1}^r i_j\}.
\end{align*}
The set $\mathcal{B}_1$ is the set of indices at the start of each cycle in $\pi$, the sets $\mathcal{B}_2$ and $\mathcal{B}_3$ are the sets of odd and even indices which are not start of any cycles and $\mathcal{B}_4$ is the set of fixed points of $\pi$. Let $Z^n=\pi(Y^n)$. It is straightforward to verify that $(X_i,Z_i), i\in \mathcal{B}_j, j\in [3] $ are three sequences of independent and identically distributed variables which are distributed according to $P_{X}P_{Y}$.  The reason is that the standard permutation shifts elements of a sequence by at most one position, whereas the elements in the sequences $(X_i,Z_i), i\in \mathcal{B}_j, j\in [3] $ are at least two indices apart and are hence independent of each other (i.e. $Z_i\neq Y_i)$. Furthermore, $(X_i,Z_i), i\in \mathcal{B}_4$ is a sequence of independent and identically distributed variables which are distributed according to $P_{X,Y}$ since $Z_i=Y_i$. Let $\underline{T}_j, j\in [4]$ be the type of the sequence $(X_i,Z_i), i\in \mathcal{B}_j, j\in [4] $, so that $T_{j,x,y}= \frac{\sum_{i\in \mathcal{B}_j}\mathbbm{1}(X_i=x, Z_i=y)}{|\mathcal{B}_j|}, j,x,y\in [4]\times\mathcal{X}\times  \mathcal{Y}$. We are interested in the probability of the event $(X^n,Z^n)\in \mathcal{B}_{\epsilon}^n(X,Y)$. From Definition \ref{Def:typ} this event can be rewritten as follows:
\begin{align*}
 &P\left(\left(X^n,Z^n\right)\in \mathcal{A}_{\epsilon}^n(X,Y)\right)
  =P\left(\underline{T}(X^n,Y^n)\stackrel{.}{=} P_{X,Y}(\cdot,\cdot)\pm \epsilon\right)
 \\&= P(\alpha_1\underline{T}_1+\alpha_2\underline{T}_2+\alpha_3\underline{T}_3+\alpha_4\underline{T}_4\stackrel{.}{=}P_{X,Y}(\cdot,\cdot)\pm \epsilon),
\end{align*}
where $\alpha_i= \frac{|\mathcal{B}_i|}{n}, i\in [4]$, we write $a\stackrel{.}{=}x\pm \epsilon$ to denote $x-\epsilon \leq a\leq x_\epsilon$, and addition is defined element-wise. We have: 
\begin{align*}
 &P((X^n,Z^n)\in \mathcal{B}_{\epsilon}^n(X,Y))
 =\sum_{(\underline{s}_1,\underline{s}_2,\underline{s}_3,\underline{s}_4)\in \mathcal{T}} P(\underline{T}_i=\underline{s}_i, i\in [4]),
\end{align*}
where $\mathcal{T}= \{(\underline{s}_1,\underline{s}_2,\underline{s}_3,\underline{s}_4):\alpha_1\underline{s}_1+\alpha_2\underline{s}_2+\alpha_3\underline{s}_3+\alpha_4\underline{s}_4\stackrel{.}{=}n(P_{X,Y}(\cdot,\cdot)\pm \epsilon)\}$. Using the property that for any set of events, the probability of the intersection is less than or equal to the geometric average of the individual probabilities, we have:
\begin{align*}
 &P((X^n,Z^n)\in \mathcal{A}_{\epsilon}^n(X,Y))
\leq \sum_{(\underline{s}_1,\underline{s}_2,\underline{s}_3,\underline{s}_4)\in \mathcal{T}} \sqrt[4]{\Pi_{i\in [4]}P(\underline{T}_i=\underline{s}_i)}.
\end{align*}
Since the elements  $(X_i,Z_i), i\in \mathcal{B}_j, j\in [4] $ are i.i.d, it follows from standard information theoretic arguments \cite{csiszarbook} that:

\begin{align*}
    & P(\underline{T}_i=\underline{s}_i) \leq 2^{-|\mathcal{B}_i|(D(\underline{s}_i||P_XP_Y)-|\mathcal{X}||\mathcal{Y}|\epsilon)}, i\in [3],
    \quad P(\underline{T}_4=\underline{s}_4) \leq 2^{-|\mathcal{B}_4|(D(\underline{s}_4||P_{X,Y})-|\mathcal{X}||\mathcal{Y}|\epsilon)}.
\end{align*}
We have, 
\begin{align*}
 &P((X^n,Z^n)\in \mathcal{A}_{\epsilon}^n(X,Y))
 \\&\leq \!\!\!\!\!\!\!\! \sum_{(\underline{s}_1,\underline{s}_2,\underline{s}_3,\underline{s}_4)\in \mathcal{T}}\!\!\!\!\!\!\!\! \sqrt[4]{2^{-n(\alpha_1D(\underline{s}_1||P_XP_Y)+\alpha_2D(\underline{s}_2||P_XP_Y)+\alpha_3D(\underline{s}_3||P_XP_Y)+\alpha_4D(\underline{s}_4||P_{X,Y})-|\mathcal{X}||\mathcal{Y}|\epsilon)}}\\
 &\stackrel{(a)}{\leq}
 \sum_{(\underline{s}_1,\underline{s}_2,\underline{s}_3,\underline{s}_4)\in \mathcal{T}} \sqrt[4]{2^{-n(D(\alpha_1\underline{s}_1+\alpha_2\underline{s}_2+\alpha_3\underline{s}_3+\alpha_4\underline{s}_4
 ||(\alpha_1+\alpha_2+\alpha_3)P_XP_Y+ \alpha_4P_{X,Y})-|\mathcal{X}||\mathcal{Y}|\epsilon)}}
 \\&
 =
 |\mathcal{T}| \sqrt[4]{2^{-n(D(P_{X,Y}
 ||(1-\alpha)P_XP_Y+ \alpha P_{X,Y})-|\mathcal{X}||\mathcal{Y}|\epsilon)}}
 \stackrel{(b)}{\leq} 2^{-\frac{n}{4}(D(P_{X,Y}
 ||(1-\alpha)P_XP_Y+ \alpha P_{X,Y})-|\mathcal{X}||\mathcal{Y}|\epsilon+O(\frac{\log{n}}{n}))},
\end{align*}
where the (a) follows from the convexity of the divergence function and (b) follows by the fact that the number of joint types grows polynomially in $n$.

\section{Proof of Lemma \ref{lem:dercount}}
\label{App:lem:dercount}
First, we prove Equation \eqref{eq:der1}. Note that 
\begin{align*}
    N_m= {n \choose m} !(n-m)\leq  {n \choose m} (n-m)!= \frac{n!}{m!}\leq n^{n-m}.
\end{align*}
This proves the right hand side of the equation. To prove the left hand side, we first argue that the iterative inequality $!n\geq !(n-1)(n-1)$ holds. In other words, the number of derangements of numbers in the interval $[n]$ is at least $n-1$ times the number of derangements of the numbers in the interval $[n-1]$. We prove the statement by constructing $!(n-1)(n-1)$ distinct derangements of the numbers $[n]$. Note that a derangement $\pi(\cdot)$ of $[n]$ is characterized by the vector $(\pi(1),\pi(2),\cdots(n))$. There are a total of $n-1$ choices for $\pi(1)$ (every integer in $[n]$ except for $1$). Once $\pi(1)$ is fixed, the rest of the vector $(\pi(2),\pi(3),\cdots, \pi(n))$ can be constructed using any derangement of the set of numbers $[n]- \{\pi(1)\}$. There are a total of $!(n-1)$ such derangements. So, we have constructed $(n-1)!(n-1)$ distinct derangements of $[n]$. Consequently. $!n\geq !(n-1)(n-1)$. By induction, we have $!n\geq (n-1)!$. So,
\begin{align*}
    N_m= {n \choose m} !(n-m)\geq {n \choose m} (n-m-1)! = \frac{n!}{m!(n-m)}. 
\end{align*}
Next, we prove that Equation \eqref{eq:der2} holds. Note that from the right hand side of Equaation \eqref{eq:der1} we have:
\begin{align*}
    \lim_{n\to \infty}\frac{\log{N_m}}{n\log{n}}\leq \lim_{n\to \infty} \frac{\log{n^{n-m}}}{n\log{n}}= 
    \lim_{n\to \infty}\frac{n-m}{n}=1-\alpha.
\end{align*}
Also, from the left hand side of Equation \eqref{eq:der2}, we have: 
\begin{align*}
     &\lim_{n\to \infty} \frac{\log{N_m}}{n\log{n}}
     \geq \lim_{n\to \infty}\frac{\log{ 
     \frac{n!}{m!(n-m)}}}{n\log{n}}
      = 
      \lim_{n\to \infty}\frac{\log{ {\frac{n!}{m!}}}}{n\log{n}}-
     \frac{{{\log{ (n-m)}}}}{n\log{n}}.
\end{align*}
The second term in the last inequality converges to 0 as $n\to \infty$. Hence, 
\begin{align*}
    &\lim_{n\to \infty} \frac{\log{N_m}}{n\log{n}}
    \geq 
     \lim_{n\to \infty}\frac{\log{ {\frac{n!}{m!}}}}{n\log{n}}
     \\& \stackrel{(a)}{\geq} 
      \lim_{n\to \infty}\frac{\log{ {\frac{n!}{m^m}}}}{n\log{n}}
      {\geq} \lim_{n\to\infty}
     \frac{\log{{n!}}}{n\log{n}}
     -\frac{\log{{m^m}}}{n\log{n}}
     \stackrel{(b)}{\geq} \lim_{n\to\infty}
     \frac{n\log{n}-n+O(\log{n})}{n\log{n}}
     -\frac{\log{{m^m}}}{n\log{n}}
     \\ &
     =\lim_{n\to\infty}
     \frac{n\log{n}}{n\log{n}}
     -\frac{\alpha n\log{{\alpha n}}}{n\log{n}}
     = 1-\alpha,
\end{align*}
where in (a) we have used the fact that $m!\leq m^m$, and (b) follows from Stirling's approximation. This completes the proof.
 \section{Proof of Lemma \ref{Lem:No_fixed}}
 \label{Ap:Lem:No_fixed}
 The proof builds upon some of the techniques developed in \cite{chen}. Let $\mathcal{A}=\{(x,y)\in \mathcal{X}\times\mathcal{Y}\big| P_XP_Y(x,y)<P_{X,Y}(x,y)\}$. Let $Z^{\{(x,y)\}}_{(\pi),i}=\mathbbm{1}(X_i,Y_{\pi(i)}=(x,y))$. We have: 
\begin{align*}
&P((X^n,\pi(Y^n))\in \mathcal{A}_{\epsilon}^n(X,Y))\leq
\\& P\Big(\Big(\bigcap_{(x,y)\in \mathcal{A}}\big\{\frac{1}{n}\sum_{i=1}^nZ^{\{(x,y)\}}_{(\pi),i}>P_{X,Y}(x,y)-{\epsilon}\big\}\Big)
 \bigcap
 \Big(\bigcap_{(x,y)\in \mathcal{A}^c}\big\{\frac{1}{n}\sum_{i=1}^nZ^{\{(x,y)\}}_{(\pi),i}<P_{X,Y}(x,y)+{\epsilon}\big\}\Big) 
  \Big)
\end{align*}
For brevity let $\alpha_{x,y}=\frac{1}{n}\sum_{i=1}^nZ^{\{(x,y)\}}_{(\pi),i}$,
and $t_{x,y}=\frac{1}{2}\log_e{\frac{P_{X,Y}(x,y)}{P_X(x)P_Y(y)}}, x,y\in \mathcal{X}$. 
  Then, 
\begin{align*}
   &Pr\Big(\Big(\bigcap_{(x,y)\in \mathcal{A}}\big\{n\alpha_{x,y}>nP_{X,Y}(x,y)-n{\epsilon}\big\}\Big)
 \bigcap
\Big(\bigcap_{(x,y)\in \mathcal{A}^c}\big\{n\alpha_{x,y}<nP_{X,Y}(x,y)+n{\epsilon}\big\}\Big) 
  \Big)\\
&=  Pr\Big(\bigcap_{(x,y)\in \mathcal{X}\times\mathcal{Y}}\big\{e^{nt_{x,y}\alpha_{x,y}}>e^{nt_{x,y}P_{X,Y}(x,y)+n{\epsilon_{x,y}}}\big\}\Big),
\end{align*}
where $\epsilon_{x,y}= t_{x,y}(1-2\mathbbm{1}(x,y\in \mathcal{A}))\epsilon$ and  we have used the fact that by construction:
\begin{align}
\begin{cases}
t_{x,y}>0 \qquad& \text{if} \qquad (x,y)\in \mathcal{A}\\
t_{x,y}<0 &\text{if} \qquad (x,y)\in \mathcal{A}^c.
\end{cases} 
\label{eq:cases}
\end{align}
 So,
\begin{align}
 \nonumber  &P\Big(\Big(\bigcap_{(x,y)\in \mathcal{A}}\big\{n\alpha_{x,y}>nP_{X,Y}(x,y)-n{\epsilon}\big\}\Big)
 \bigcap
 \Big(\bigcap_{(x,y)\in \mathcal{A}^c}\big\{n\alpha_{x,y}<nP_{X,Y}(x,y)+n{\epsilon}\big\}\Big) 
  \Big)\\
&\stackrel{(a)}{\leq}
P\Big(\prod_{(x,y)\in \mathcal{X}\times\mathcal{Y}}e^{nt_{x,y}\alpha_{x,y}}>\prod_{(x,y)\in \mathcal{X}\times\mathcal{Y}}e^{nt_{x,y}P_{X,Y}(x,y)-n{\epsilon_{x,y}}}\Big)\\
&\stackrel{(b)}{\leq} e^{-\sum_{x,y}n(t_{x,y}P_{X,Y}(x,y)-{\epsilon}_{x,y})}\mathbb{E}(\prod_{x,y}e^{nt_{x,y}\alpha_{x,y}})
= 
e^{-\sum_{x,y}n(t_{x,y}P_{X,Y}(x,y)-{\epsilon}_{x,y})}\mathbb{E}(e^{\sum_{i=1}^{{n}}\sum_{x,y}t_{x,y}Z^{\{(x,y)\}}_{(\pi),i}})
\\&\stackrel{(c)}{\leq} e^{-\sum_{x,y}n(t_{x,y}P_{X,Y}(x,y)-{\epsilon}_{x,y})}\mathbb{E}^{\frac{1}{2}}(e^{\sum_{i\in \mathcal{O}}\sum_{x,y}2t_{x,y}Z^{\{(x,y)\}}_{(\pi),i}})
\mathbb{E}^{\frac{1}{2}}(e^{\sum_{i\in \mathcal{E}}\sum_{x,y}2t_{x,y}Z^{\{(x,y)\}}_{(\pi),i}})
\\&
= e^{-\sum_{x,y}n(t_{x,y}P_{X,Y}(x,y)-{\epsilon}_{x,y})}\prod_{i\in \mathcal{O}}\mathbb{E}^{\frac{1}{2}}(e^{\sum_{x,y}2t_{x,y}Z^{\{(x,y)\}}_{(\pi),i}})
\prod_{i\in \mathcal{E}}\mathbb{E}^{\frac{1}{2}}(e^{\sum_{x,y}2t_{x,y}Z^{\{(x,y)\}}_{(\pi),i}})
,
\label{eq:temp3}
\end{align}
where $\mathcal{O}$ and $\mathcal{E}$ are the odd and even indices in the set $[1,n]$. In (a) we have used the fact that the exponential function is increasing and positive, (b) follows from the Markov inequality and (c) follows from the Cauchy-Schwarz inequality. Note that: 
\begin{align*}
    \mathbb{E}(e^{\sum_{x,y}2t_{x,y}Z^{\{(x,y)\}}_{(\pi),i}})
    \stackrel{(a)}{=}\sum_{x,y} P_{X}(x)P_{Y}(y) e^{2t_{x,y}}= \sum_{x,y} P_{X}(x)P_{Y}(y) e^{\log_e{\frac{P_{X,Y}(x,y)}{P_X(x)P_Y(y)}}}
    =\sum_{x,y} P_{X,Y}(x,y)=1,
\end{align*}
where in (a) we have used the fact that $X_i$ and $Y_{\pi(i)}$ are independent since the permutation does not have any fixed points. Consequently, we have shown that: 
\begin{align*}
    Pr((X^n,\pi(Y^n))\in \mathcal{A}_{\epsilon}^n(X,Y))
    &\leq
    e^{-\sum_{x,y}n(t_{x,y}P_{X,Y}(x,y)-{\epsilon}_{x,y})}
    = e^{-\sum_{x,y}n(\frac{1}{2}P_{X,Y}(x,y)\log_e{\frac{P_{X,Y}(x,y)}{P_X(x)P_Y(y)}}-{\epsilon}_{x,y})}
    = 2^{-\frac{1}{2}n(I(X;Y)-\delta)}.
\end{align*}
This completes the proof.
 \section{Proof of Lemma \ref{Lem:short}}
 \label{Ap:Lem:short}
 The proof follows by similar arguments as that of Lemma \ref{Lem:No_fixed}. Following similar steps, we have
\begin{align}
\nonumber
&P((X^n,\pi(Y^n))\in \mathcal{A}_{\epsilon}^n(X,Y))=
 Pr\Big(\bigcap_{(x,y)\in \mathcal{X}\times\mathcal{Y}}\big\{e^{\frac{n}{s}t_{x,y}\alpha_{x,y}}>e^{\frac{n}{s}t_{x,y}P_{X,Y}(x,y)+n{\epsilon_{x,y}}}\big\}\Big)
 \\\nonumber&
 {\leq}
P\Big(\prod_{(x,y)\in \mathcal{X}\times\mathcal{Y}}e^{\frac{n}{s}t_{x,y}\alpha_{x,y}}>\prod_{(x,y)\in \mathcal{X}\times\mathcal{Y}}e^{\frac{n}{s}t_{x,y}P_{X,Y}(x,y)-\frac{n}{s}{\epsilon_{x,y}}}\Big)
\\\nonumber& 
\leq e^{-\sum_{x,y}\frac{n}{s}(t_{x,y}P_{X,Y}(x,y)-{\epsilon}_{x,y})}\mathbb{E}(\prod_{x,y}e^{\frac{n}{s}t_{x,y}\alpha_{x,y}})
\\\label{eq:short}&=e^{-\sum_{x,y}\frac{n}{s}(t_{x,y}P_{X,Y}(x,y)-{\epsilon}_{x,y})}
\prod_{j\in [1,c]}\mathbb{E}(e^{\frac{1}{s}\sum_{x,y}\sum_{k=1}^{i_j}t_{x,y}Z^{\{(x,y)\}}_{(\pi),i}}).
\end{align}
We need to investigate $\mathbb{E}(e^{\frac{1}{s}\sum_{x,y}\sum_{k=1}^{i_j}t_{x,y}Z^{\{(x,y)\}}_{(\pi),i}})$. 
Define $T_j^{(x,y)}=\sum_{k=1}^{i_j}Z^{\{(x,y)\}}_{(\pi),i}, j\in [1,c], x,y\in \mathcal{X}\times\mathcal{Y}$ as the number of occurrences of the pair $(x,y)$ in the $j$th cycle. Note that by definition, we have $\sum_{x,y}\sum_{k=1}^{i_j}Z^{\{(x,y)\}}_{(\pi),i}=\sum_{x,y}T^{(x,y)}_j=i_j$. Define $S^{(x,y)}_{j}=\frac{1}{s}T_j^{(x,y)}, j\in [1,c], x,y\in \mathcal{X}\times\mathcal{Y}$.
Let $\mathcal{B}=\{(s^{(x,y)}_j)_{j\in [1,c], x,y\in \mathcal{X}\times\mathcal{Y}}: \sum_{x,y}s^{(x,y)}_j=\frac{i_j}{s}, j \in [1,c]\}$ be the set of feasible values for the vector $(S^{(x,y)}_j)_{j\in [1,c], x,y\in \mathcal{X}\times\mathcal{Y}}$. We have: 
\begin{align*}
   \mathbb{E}(e^{\frac{1}{s}\sum_{x,y}\sum_{k=1}^{i_j}t_{x,y}Z^{\{(x,y)\}}_{(\pi),i}})
   &=\mathbb{E}(e^{\sum_{x,y}t_{x,y}\frac{1}{s}\sum_{k=1}^{i_j}Z^{\{(x,y)\}}_{(\pi),i}})
   =\mathbb{E}(e^{\sum_{x,y}t_{x,y}S^{(x,y)}_j})
   \\&=\sum_{(s^{\{(x,y)\}}_{j})_{j\in [1,c], x,y\in \mathcal{X}\times \mathcal{Y}}\in \beta}
P((s^{\{(x,y)\}}_{j})_{j\in [1,c], x,y\in \mathcal{X}\times \mathcal{Y}})e^{\sum_{x,y}t_{x,y}s^{\{(x,y)\}}_{j}}.
\end{align*}
For a fixed vector $(s^{\{(x,y)\}}_{j})_{j\in [1,c], x,y\in \mathcal{X}}\in \beta$, let $V^{(x,y)}$ be defined as the random variable for which $P(V^{(x,y)}=t_{(x,y)})=s^{\{(x,y)\}}_{j}, x,y\in \mathcal{X} $ and $P(V^{(x,y)}=0)=1-\frac{i_j}{s}$ (note that $P_V$ is a valid probability distribution). We have:
\begin{align*}
   &\mathbb{E}(e^{\frac{1}{s}\sum_{x,y}\sum_{k=1}^{i_j}t_{x,y}Z^{\{(x,y)\}}_{(\pi),i}})
   =\sum_{(s^{\{(x,y)\}}_{j})_{j\in [1,c], x,y\in \mathcal{X}\times \mathcal{Y}}\in \beta}
P((s^{\{(x,y)\}}_{j})_{j\in [1,c], x,y\in \mathcal{X}\times \mathcal{Y}})e^{\sum_{x,y}t_{x,y}s^{\{(x,y)\}}_{j}}
\\&=\sum_{(s^{\{(x,y)\}}_{j})_{j\in [1,c], x,y\in \mathcal{X}\times \mathcal{Y}}\in \beta}
P((s^{\{(x,y)\}}_{j})_{j\in [1,c], x,y\in \mathcal{X}\times \mathcal{Y}})e^{\mathbb{E}(V^{(x,y)})}\leq
\sum_{(s^{\{(x,y)\}}_{j})_{j\in [1,c], x,y\in \mathcal{X}\times \mathcal{Y}}\in \beta}
P((s^{\{(x,y)\}}_{j})_{j\in [1,c], x,y\in \mathcal{X}\times \mathcal{Y}})\mathbb{E}(e^{V^{(x,y)}}),
\end{align*}
where we have used Jensen's inequality in the last equation. Note that by construction, we have $\mathbb{E}(e^{V^{(x,y)}})= 1-\frac{i_j}{s}+\sum_{x,y}s^{(x,y)}_je^{t_{x,y}}$. Consequently:
\begin{align*}
   &\mathbb{E}(e^{\frac{1}{s}\sum_{x,y}\sum_{k=1}^{i_j}t_{x,y}Z^{\{(x,y)\}}_{(\pi),i}})
   \leq 
   \sum_{(s^{\{(x,y)\}}_{j})_{j\in [1,c], x,y\in \mathcal{X}\times \mathcal{Y}}\in \beta}
P((s^{\{(x,y)\}}_{j})_{j\in [1,c], x,y\in \mathcal{X}\times \mathcal{Y}})(1-\frac{i_j}{s}+\sum_{x,y}s^{(x,y)}_je^{t_{x,y}})
\\&= 
 1-\frac{i_j}{s}+\sum_{x,y}e^{t_{x,y}}\mathbb{E}(S_{j}^{(x,y)})
 =  1-\frac{i_j}{s}+\sum_{x,y}e^{t_{x,y}}\mathbb{E}(\frac{1}{s}\sum_{k=1}^{i_j}Z^{\{(x,y)\}}_{(\pi),i})
 \\& =  1-\frac{i_j}{s}+
 \frac{1}{s}\sum_{x,y}\sum_{k=1}^{i_j}e^{t_{x,y}}\mathbb{E}(Z^{\{(x,y)\}}_{(\pi),i})
 = 1-\frac{i_j}{s}+
 \frac{1}{s}\sum_{x,y}\sum_{k=1}^{i_j}e^{t_{x,y}}P_X(x)P_Y(y)
 \\& = 1-\frac{i_j}{s}+
 \frac{1}{s}\sum_{x,y}\sum_{k=1}^{i_j}P_{X,Y}(x,y)=1.
\end{align*}
Setting $\mathbb{E}(e^{\frac{1}{s}\sum_{x,y}\sum_{k=1}^{i_j}t_{x,y}Z^{\{(x,y)\}}_{(\pi),i}})\leq 1$ in Equation \eqref{eq:short}, we get: 
\begin{align*}
    P((X^n,\pi(Y^n))\in \mathcal{A}_{\epsilon}^n(X,Y))\leq e^{-\sum_{x,y}\frac{n}{s}(t_{x,y}P_{X,Y}(x,y)-{\epsilon}_{x,y})}= 2^{-\frac{n}{s}(I(X;Y)-{\epsilon}_{x,y})}.
\end{align*}
\section{Proof of Theorem \ref{th:2}}
\label{Ap:th:2}
The proof builds upon the arguments provided in the proof of Theorem \ref{th:1}. Let $Y^{n}=\pi_l(X_{(l)} ^n)_{l\in [k]}$.
First, we construct a partition $\mathsf{D}=\{\mathcal{C}_{j,t}:j\in [b_k], t\in [k(k-1)]\}$ such that each sequence of vectors $(Y_{(l), \mathcal{C}_{j,t}})_{l\in [k]}$ is an collection of independent vectors of i.i.d variables, where $Y_{(l), \mathcal{C}_{j,t}}=(Y_{(l),c})_{c\in  \mathcal{C}_{j,t}}$. Loosely speaking, this partitioning of the indices `breaks' the multi-letter correlation among the sequences induced due to the permutation and allows the application of standard information theoretic tools to bound the probability of joint typicality. The partition is constructed in two steps.  
We first construct a \textit{coarse} partition $\mathsf{C}=\{\mathcal{C}_1,\mathcal{C}_2,\cdots, \mathcal{C}_{b_k}\}$ of the indices $[1,n]$ for which the sequence of vectors $(Y_{(l),\mathcal{C}_j}), l\in [k]$ is identically distributed but not necessarily independent. The set $\mathcal{C}_j, j\in [b_k]$ is defined as the set of indices corresponding to partition $\mathcal{P}_j$, where correspondence is defined in Definition \ref{Def:corr}. Clearly, $\mathsf{C}=\{\mathcal{C}_1,\mathcal{C}_2,\cdots, \mathcal{C}_{b_k}\}$ partitions $[1,n]$ since each index corresponds to exactly one partition $\mathcal{P}_j$. To verify that the elements of the sequence $(Y_{(l),\mathcal{C}_j}), l\in [k]$ are identically distributed let us consider a fixed $j\in [b_k]$ and an arbitrary index $c\in \mathcal{C}_j$. Then the vector $(Y_{(1),c},Y_{(2),c},\cdots,Y_{(k),c})$ is distributed according to $P_{X_{\mathcal{P}_j}}$. To see this, note that:
\begin{align*}
    P_{Y_{(1),c},Y_{(2),c},\cdots,Y_{(k),c}}
    &= P_{X_{(1),(\pi_1^{-1}(c))},X_{(2),(\pi_2^{-1}(c))},\cdots,X_{(k),(\pi_k^{-1}(c))}}
\end{align*}
From the assumption that the index $c$ corresponds to the partition $\mathcal{P}_j$, we have that $\pi_l^{-1}(c)= \pi_{l'}^{-1}(c)$ if and only if $l,l'\in \mathcal{A}_{j,r}$ for some integer $r\in [|\mathcal{P}_j|]$. Since by the theorem statement $(X^n_{(l)})_{l\in [k]}$ is an i.i.d. sequence of vectors, the variables $X_{(l),\pi^{-1}_l(c)}$ and $X_{(l'),\pi^{-1}_{l'}(c)}$ are independent of each other if $\pi_l^{-1}(c)\neq \pi_{l'}^{-1}(c)$.  Consequently, 
\begin{align*}
    P_{Y_{(1),c},Y_{(2),c},\cdots,Y_{(k),c}}
    &= \prod_{r\in [|\mathcal{P}_j|]}P_{X_{t_1},X_{t_2},\cdots,X_{t_{|\mathcal{A}_{j,r}|}}}=P_{X_{\mathcal{P}_j}}.
\end{align*}
This proves that the sequences $(Y_{(l),\mathcal{C}_j}), l\in [k]$ are identically distributed with distribution $P_{X_{\mathcal{P}_j}}$.
In the next step, we decompose the partition $\mathsf{C}$ to arrive at a finer partition $\mathsf{D}=\{\mathcal{C}_{j,t}:j\in [b_k], t\in [k(k-1)]\}$ of $[1,n]$ such that $(Y_{(l),\mathcal{C}_{j,t}})_{l\in [k]}$ is an i.i.d sequence of vectors. Let $\mathcal{C}_j=\{c_1,c_2,\cdots,c_{|\mathcal{C}_j|}\}, j\in [b_k]$.
The previous step shows that the sequence consists of identically distributed vectors. In order to guarantee independence, we need to ensure that for any $c,c'\in \mathcal{C}_{j,t}$, we have $\pi^{-1}_{l}(c)
\neq \pi^{-1}_{l'}(c'), \forall l,l' \in [k]$. Then, independence of $(Y_{(l),c})_{l\in [k]}$ and $(Y_{(l),c'})_{l\in [k]}$ is guaranteed due to the independence of the sequence of vectors $(X^n_{(l)})_{l\in [k]}$. To this end we assign the indices in $\mathcal{C}_j$ to the sets $\mathcal{C}_{j,t}, t\in [k(k-1)]$ as follows:
\begin{align}
    &c_1\in \mathcal{C}_{j,1},\\
    &c_i\in \mathcal{C}_{j,l}: t= 
    \min \{t'| \nexists c'\in \mathcal{C}_{j,t'}, l,l' \in [k]: \pi^{-1}_l(c_i)=\pi^{-1}_{l'}(c')\}, i>1.
    \label{eq:assign}
\end{align}
Note that the set $\mathcal{C}_{j,t}$ defined in Equation \eqref{eq:assign} always exists since for any given $l \in [m]$, the value $\pi^{-1}_l(c)$ can be the same for at most $k$ distinct indices $c$ since each of the $k$ permutations maps one index to $\pi^{-1}_l(c)$. Furthermore, since $l$ takes $k$ distinct values, there are at most $k(k-1)-1$ indices $c'$ not equal to $c$ for which there exists $l,l'\in[k]$ such that $\pi_l(c)=\pi_{l'}(c')$. Since there are a total of $k(k-1)$ sets $\mathcal{C}_{j,t}$, by the Pigeonhole Principle, there exists at least one set for which there is no element $c'$ such that $\pi_l(c)=\pi_{l'}(c')$ for any value of $l,l'$. Consequently, $(Y_{(l),\mathcal{C}_{j,t}})_{l\in [k]}$ is an i.i.d. sequence with distribution $P_{X_{\mathcal{P}_{j}}}$.

Let $\underline{T}_{j,t}, j\in [b_k], t\in [k(k-1)]$ be the type of the sequence of vectors $(Y_{(l),\mathcal{C}_{j,t}})_{l\in [k]}$, so that $T_{j,t,x^k}= \frac{\sum_{c\in \mathcal{C}_{j,t}}\mathbbm{1}((Y_{(1),c},Y_{(2),c},\cdots, Y_{(k),c})=x^k)}{|\mathcal{C}_{j,t}|}, x^k\in \mathcal{X}^k$. We are interested in the probability of the event $(Y^n_{(l)})_{l\in [k]}\in \mathcal{A}_{\epsilon}^n(X^k)$. From Definition \ref{Def:ctyp} this event can be rewritten as follows:
\begin{align*}
 &P\left(\left(Y^n_{(l)})_{l\in [k]}\right)\in \mathcal{A}_{\epsilon}^n(X^k)\right)
  =P\left({T}((Y^n_{(l)})_{l\in [k]},x^m)\stackrel{.}{=} P_{X^k}(x^k)\pm \epsilon, \forall x^k\right)
 \\&= P(\sum_{j,t}\alpha_{j,t}{T}_{j,t,x^k}\stackrel{.}{=}P_{X^k}(x^k)\pm \epsilon, \forall x^m),
\end{align*}
where $\alpha_{j,t}= \frac{|\mathcal{C}_{j,t}|}{n}, j\in [b_k], t\in [k(k-1)]$, we write $a\stackrel{.}{=}x\pm \epsilon$ to denote $x-\epsilon \leq a\leq x_\epsilon$, and addition is defined element-wise. We have: 
\begin{align*}
 &P\left(\left(Y^n_{(l)})_{l\in [k]}\right)\in \mathcal{A}_{\epsilon}^n(X^k)\right)
 =\sum_{(\underline{s}^{b_k,k(k-1)})\in \mathcal{T}} P(\underline{T}_{j,t}=\underline{s}_{j,t},j\in [b_k], t\in [k(k-1)]),
\end{align*}
where $\mathcal{T}= \{(\underline{s}^{b_k,k(k-1)}:\sum_{j,t}\alpha_{j,t}{T}_{j,t,x^k}\stackrel{.}{=}P_{X^k}(x^k)\pm \epsilon, \forall x^k\}$. Using the property that for any set of events, the probability of the intersection is less than or equal to the geometric average of the individual probabilities, we have:
\begin{align*}
 &P((Y^n_{(l)})_{l\in [k]}\in \mathcal{A}_{\epsilon}^n(X^k))
\leq 
\sum_{(\underline{s}^{b_k,k(k-1)})\in \mathcal{T}} \sqrt[k(k-1)b_k]{\Pi_{i\in [j,t]}P(\underline{T}_{j,t}=\underline{s}_{j,t})}.
\end{align*}
Since the elements  $(Y_{(l),\mathcal{C}_{j,t}}), j\in [b_k], t\in [k(k-1)]$ are i.i.d by construction, it follows from standard information theoretic arguments \cite{csiszarbook} that:

\begin{align*}
    & P(\underline{T}_{j,t}=\underline{s}_{j,t}) \leq 2^{-|\mathcal{C}_{j,t}|(D(\underline{s}_i||P_{X_{\mathcal{P}_j}})-\prod_{l\in [k]}|\mathcal{X}_l|\epsilon)}, j\in [b_k], t\in [k(k-1)].
\end{align*}
We have, 
\begin{align*}
 &P((Y^n_{(l)})_{l\in [k]}\in \mathcal{A}_{\epsilon}^n(X^k))
 \leq  \sum_{(\underline{s}^{b_k,k(k-1)})\in \mathcal{T}} \sqrt[k(k-1)b_k]{\Pi_{i\in [j,t]} 2^{-|\mathcal{C}_{j,t}|(D(\underline{s}_i||P_{X_{\mathcal{P}_j}})-\prod_{l\in [k]}|\mathcal{X}_l|\epsilon)}}\\
 &\stackrel{(a)}{\leq}
\sum_{(\underline{s}^{b_k,k(k-1)})\in \mathcal{T}}
 \sqrt[k(k-1)b_k]
 {2^{-n(D(\sum_{j,t}\alpha_{j,t}\underline{s}_{j,t}
 ||\sum_{k}P_{X_{\mathcal{P}_j}})-\prod_{l\in [k]}|\mathcal{X}_l|\epsilon)}}
 \\&\stackrel{(b)}{\leq}  2^{-\frac{n}{k(k-1)b_k}(D(P_{X,Y}
 ||\sum_{j\in [b_k]}\frac{|\mathcal{C}_j|}{n}
 P_{X_{\mathcal{P}_j}})-\epsilon\prod_{l\in [k]}|\mathcal{X}_l|+O(\frac{\log{n}}{n}))}.
\end{align*}
where the (a) follows from the convexity of the divergence function and (b) follows by the fact that the number of joint types grows polynomially in $n$.
\section{Proof of Lemma \ref{lem:foldcount}}
\label{Ap:lem:foldcount}
The upper-bound follows by the fact that for $r$-fold derangement $(\pi_1(\cdot),\pi_2(\cdot),\cdots,\pi_k(\cdot))$, the first permutation is $\pi_1(\cdot)$ is the identity permutation, and the rest of derangements with respect to $\pi_1(\cdot)$, so by the counting principle there are at most $(!n)^{r-1}$ choices for $(\pi_1(\cdot),\pi_2(\cdot),\cdots,\pi_k(\cdot))$. Next we prove the lower bound. Note that $\pi_1(\cdot)$ is the identity permutation. By the same arguments as in the proof of Lemma \ref{lem:dercount}, there are at least $(n-1)!$ choices of distinct $\pi_2(\cdot)$, and for any fixed $\pi_2(\cdot)$ there are at least $(n-2)!$ distinct $\pi_3(\cdot)$. Generally, for fixed $\pi_2(\cdot),\pi_3(\cdot),\cdots, \pi_j(\cdot)$, there are at least $(n-j+1)!$ choices of distinct $\pi_{j+1}(\cdot)$. By the counting principle, there are at least $\prod_{j\in [r]}(n-j+1)!\geq ((n-r+1)!)^r$ distinct $(\pi_1(\cdot),\pi_2{\cdot},\cdots,\pi_{r}(\cdot))$. This completes the proof.

\section{Proof of Lemma \ref{lem:bell}}
\label{Ap:lem:bell}
First, we prove the upper-bound in Equation \eqref{eq:bell1}. As an initial step, we count the number of distinct allocations of partition correspondence to indices $i\in [1,n]$. Since we are considering $(i_1,i_2,\cdots, i_{b_k})$-Bell permutation vectors, there are a total of $i_j$ indices corresponding to $\mathcal{P}_j$ for $j\in [b_k]$. So, there are ${n \choose i_1,i_2,\cdots, i_{b_k}}$ allocations of partition correspondence to different indices. Now assume that the $i^{th}$ index corresponds to the $j$th partition. Then, we argue that there are at most $n^{|\mathcal{P}_j|}$ possible values for the vector $(\pi_{l}(i): l\in [k])$. The reason is that by definition, for any two $\pi_l(i)$ and $\pi_{l'}(i)$, their value are equal if and only if $l,l' \in \mathcal{A}_{j,r}$ for some integer $r\in [|\mathcal{P}_j|]$. So, the elements of $(\pi_{l}(i): l\in [m])$ take $|\mathcal{P}_j|$ distinct values among the set $[1,n]$. Consequently $(\pi_{l}(i): l\in [k])$ takes at most $n^{|\mathcal{P}_j|}$ distinct values. By the counting principle, the sequence of vectors $(\pi_{l}(i): l\in [k]), i\in [n]$ takes at most $n^{\sum_{j\in [b_k]} |\mathcal{P}_j|i_j-n}$  distinct values given a specific partition correspondence, since $\pi_1(\cdot)$ is assumed to be the identity permutation. Since there are a total of ${n \choose i_1,i_2,\cdots, i_{b_k}}$ partition correspondences, we have:   
\begin{align*}
    N_{i_1,i_2,\cdots,i_{b_k}} \leq {n \choose i_1,i_2,\cdots, i_{b_k}} n^{\sum_{j\in [b_k]} |\mathcal{P}_j|i_j-n}.
\end{align*}
Next, we prove the lower-bound in Equation \eqref{eq:bell1}. The proof follows by constructing enough distinct $(i_1,i_2,\cdots, i_{b_k})$-Bell permutation vectors. First, we choose a partition correspondence for the indices $i\in [n]$ similar to the proof for the lower-bound. There are ${n \choose i_1,i_2,\cdots, i_{b_k}}$ distinct ways of allocating the partition correspondence. We argue that for every fixed partition correspondence, there are at least $\prod_{j\in [b_k]]}d_{|\mathcal{P}_{j}
|}(i_j)$ permutations which are $(i_1,i_2,\cdots, i_{b_k})$-Bell permutation vectors. To see this,
without loss of generality, assume that the first $i_1$ indices $[1,i_1]$ correspond to $\mathcal{P}_1$, the next $i_2$ indices $[i_1+1,i_1+i_2]$ correspond to $\mathcal{P}_2$, and in general the indices $[\sum_{t=1}^{l-1}i_t+1, \sum_{t=1}^{l}i_t]$ correspond to $\mathcal{P}_j$. Let $(\pi'_{1,j},\pi'_{2,j},\cdots,\pi'_{|\mathcal{P}_j|,j})$ be vectors of $|\mathcal{P}_j|$-fold derangements of $[\sum_{t=1}^{j-1}i_t+1, \sum_{t=1}^{j}i_t]$, where $j\in [b_k]$. Then, the following is an $(i_1,i_2,\cdots,i_{b_k})$-Bell permutation vector. 
\begin{align*}
    \pi_l([\sum_{t=1}^{j-1}i_t+1, \sum_{t=1}^{j}i_t])= \pi'_{l,j}([\sum_{t=1}^{j-1}i_t+1, \sum_{t=1}^{j}i_t]), \quad \text{if } l\in \mathcal{A}_{s,j}, s\in [|\mathcal{P}_j|], j\in [b_k].
\end{align*}
There are a total of $d_{|\mathcal{P}_j|(i_j)}$ choices of $(\pi'_{1,j},\pi'_{2,j},\cdots,\pi'_{|\mathcal{P}_j|,j})$. So, by the counting principle, there are a total of $\prod_{j\in [b_k]}d_{|\mathcal{P}_j|}(i_j)$ choices of $(\pi_{1}(\cdot),\pi_2(\cdot),\cdots, \pi_{k}(\cdot))$ for a fixed partition correspondence. As argued previously, there are a total of  ${n \choose i_1,i_2,\cdots, i_{b_k}}$ distinct choices for partition correspondence. Consequently we have shown that, 
\begin{align*}
     {n \choose i_1,i_2,\cdots, i_{b_k}}\prod_{j\in [b_k]}d_{|\mathcal{P}_{j}|}(i_j)\leq N_{i_1,i_2,\cdots,i_{b_k}}.
\end{align*}
This completes the proof of Equation \eqref{eq:bell1}. We proceed with to prove Equation \eqref{eq:bell2}. Note that from the right hand side of Equation \eqref{eq:bell1}, we have:
\begin{align*}
 &\lim_{n\to\infty} \frac{\log_e{N_{i_1,i_2,\cdots,i_{b_k}}}}{n\log_e{n}}
 \leq
 \lim_{n\to\infty} \frac{\log_e{{n \choose i_1,i_2,\cdots, i_{b_k}} n^{(\sum_{j\in [b_k]} |\mathcal{P}_j|i_j-n)}}}{n\log_e{n}}
 = 
 \lim_{n\to\infty} \frac{\log_e{ n^{(\sum_{j\in [b_k]} |\mathcal{P}_j|i_j-n)}}}{n\log_e{n}}
 +\lim_{n\to\infty} \frac{\log_e{{n \choose i_1,i_2,\cdots, i_{b_k}}}}{n\log_e{n}}
 \\&
= \lim_{n\to\infty} \frac{{ ({\sum_{j\in [b_k]} |\mathcal{P}_j|i_j-n)}}}{n}
 +\lim_{n\to\infty} \frac{\log_e{2^n}}{n\log_e{n}}
 = {\sum_{j\in [b_k]} |\mathcal{P}_j|\alpha_j}-1.
\end{align*}
On the other hand, from the left hand side of Equation \eqref{eq:bell1}, we have:
\begin{align*}
    &\lim_{n\to\infty} \frac{\log{N_{i_1,i_2,\cdots,i_{b_k}}}}{n\log{n}}
    \geq \lim_{n\to\infty} \frac{\log{{n \choose i_1,i_2,\cdots, i_{b_k}}\prod_{j\in [b_k]}d_{|\mathcal{P}_{j}|}(i_j)}}{n\log{n}}
    \\&\stackrel{(a)}{\geq}
    \lim_{n\to\infty} \frac{\log{2^n\prod_{j\in [b_k]}d_{|\mathcal{P}_{j}|}(i_j)}}{n\log{n}}
    \stackrel{(b)}{\geq}
    \lim_{n\to\infty} \frac{\log{\prod_{j\in [b_k]}((i_j-|\mathcal{P}_j|+1)!^{|\mathcal{P}_j|-1})}}{n\log{n}}
    \\& = 
    \lim_{n\to\infty} \frac{\sum_{j\in [b_k]}{(|\mathcal{P}_j|-1)}\log{(i_j-|\mathcal{P}_j|+1)!}}{n\log{n}}
    \\&\stackrel{(c)}{=}
    \lim_{n\to\infty} \frac{\sum_{j\in [b_k]}{(|\mathcal{P}_j|-1)}({(i_j-|\mathcal{P}_j|+1)\log{(i_j-|\mathcal{P}_j|+1)}-(i_j-|\mathcal{P}_j|+1)+O(\log{(i_j-|\mathcal{P}_j|+1)}))}}{n\log{n}}
    \\&= \sum_{j\in [b_k]}|\mathcal{P}_j|\alpha_j-1,
\end{align*}
where (a) follows from the fact that ${n \choose i_1,i_2,\cdots, i_{b_k}}\leq 2^n$, (b) follows from Lemma \ref{lem:foldcount}, and in (c) we have used Stirling's approximation.



 \end{appendices}

\newpage
\bibliographystyle{IEEEtran}
\bibliography{ref}

\end{document}